\documentclass[hidelinks,onefignum,onetabnum]{dependencies/siamart220329}

\usepackage{lipsum}
\usepackage{amsfonts}
\usepackage{graphicx}
\usepackage{epstopdf}
\usepackage{mathtools}
\usepackage{algorithmic}
\ifpdf
  \DeclareGraphicsExtensions{.eps,.pdf,.png,.jpg}
\else
  \DeclareGraphicsExtensions{.eps}
\fi
\usepackage{amssymb}
\usepackage{mathabx}
\usepackage{pifont}

\usepackage[english]{babel}
\usepackage[T1]{fontenc}
\usepackage[utf8]{inputenc}

\usepackage{booktabs}

\newcommand{\Cmbb}{\mathbb{C}}

\newcommand{\Ocal}{{\cal O}}

\newcommand{\C}[1]{\mathbb{C}^{#1}}

\newcommand{\R}[1]{\mathbb{R}^{#1}}

\DeclareMathOperator{\trace}    {tr}

\newcommand{\bra}[1]{\left\langle{#1}\right|}
\newcommand{\ket}[1]{\left|{#1}\right\rangle}

\newsiamremark{remark}{Remark}
\newsiamremark{hypothesis}{Hypothesis}
\crefname{hypothesis}{Hypothesis}{Hypotheses}
\newsiamthm{claim}{Claim}

\headers{Analysis of iterative refinement for QLSA}{Harkness et al.}

\title{An analysis of iterative refinement for
quantum linear system solvers
}

\author{Adrian Harkness\thanks{Physical and Computational Sciences Directorate, Pacific Northwest National Laboratory, Richland, WA, USA, and Department of Industrial and Systems Engineering, Lehigh University, Bethlehem, PA, USA (\email adh323@lehigh.edu)}
\and Mohammadhossein Mohammadisiahroudi\thanks{Department of Mathematics and
Statistics, Quantum Science Institute, University of Maryland, Baltimore County,
Baltimore, MD, USA}
\and Brandon Augustino\thanks{Department of Industrial and Systems Engineering, Lehigh University, Bethlehem, PA, USA}
\and Giacomo Nannicini\thanks{Department of Industrial and Systems Engineering,
and Department of Electrical and Computer Engineering, University of Southern
California, Los Angeles, CA, USA}
\and Tam\'as Terlaky\footnotemark[3]}
\usepackage{amsopn}

\makeatletter
\def\refstepcounter@optarg[#1]#2{%
  \cref@old@refstepcounter{#2}%
  \cref@constructprefix{#2}{\cref@result}%
  \@ifundefined{cref@#1@alias}%
    {\def\@tempa{#1}}%
    {\def\@tempa{\csname cref@#1@alias\endcsname}}%
  \protected@edef\cref@currentlabel{%
    [\@tempa][\arabic{#2}][\cref@result]%
    \csname p@#2\endcsname\csname the#2\endcsname}%
}
\makeatother
\ifpdf
\hypersetup{
  pdftitle={IR-LSP},
  pdfauthor={M.Mohammadi}
}
\fi

\begin{document}

\maketitle

\begin{abstract}
We present and analyze an iterative refinement (IR) framework for improving the precision dependence of algorithms that combine a quantum linear system algorithm (QLSA) with quantum state tomography. 
Existing QLSAs achieve polylogarithmic dependence on the inverse error tolerance to prepare a quantum state encoding the solution, but extracting a classical description of such a state via tomography typically introduces a polynomial dependence on the target precision.
To retain polylogarithmic dependence in the inverse error tolerance throughout the entire process, we develop an IR scheme that solves a sequence of related linear systems to progressively refine the solution while requiring only fixed-precision quantum subroutines, finally obtaining a classical description of a high-precision solution. 
%
%
We analyze the proposed framework under three input models: quantum-read/classical-write RAM (QRAM), linear combinations of unitaries (LCU), and sparse-access oracles. 
We evaluate the proposed scheme through numerical experiments on both quantum simulators and real quantum hardware. 
The results demonstrate that iterative refinement efficiently improves the precision of the solution and exhibits robustness to hardware noise.
\end{abstract}

\begin{keywords}
  Systems of Linear Equations, Quantum Linear System Algorithms, Quantum State Tomography, Quantum Computing, Iterative Refinement.
\end{keywords}

\begin{AMS}
65F10, 81P68, 68Q12
\end{AMS}

\section{Introduction}

It is widely believed that quantum computers can solve some problems faster than classical algorithms (``classical'' means ``non-quantum''). 
%
The increasing size and capabilities of current devices  have added further momentum to the development of theory and algorithms for quantum computing, with the overarching goal of providing convincing evidence of a quantum advantage.

Recently, considerable attention has been devoted to quantum algorithms for linear algebraic tasks. A particular interest lies in a class of methods that use quantum algorithms to prepare a quantum state $\ket{x}$ that is proportional to the solution of the linear system 
\begin{equation}\label{e:LS}
    Ax = b, 
\end{equation}
for a given $s$-row sparse matrix $A \in \mathbb{R}^{d \times d}$ (i.e., $A$ has at most $s$ nonzero entries per row) and a vector $b \in \R{d}$. In the \textit{quantum linear system problem} (QLSP), one has access to some quantum circuit describing $A$ and the ability to prepare a quantum state $\ket{b}$ that is proportional to the right-hand side vector $b$. 

Research into this subfield began with the work of Harrow, Hassidim, and Lloyd \cite{harrow2009quantum}, who proposed what has come to be known as the HHL algorithm for solving the QLSP. In this seminal work, it was shown that a quantum computer could be used to solve a QLSP with a worst-case complexity\footnote{We use $\widetilde{\Ocal}$, which suppresses the polylogarithmic factors in the ``Big-O'' notation.
Subscripts of $\widetilde{\Ocal}$ indicate the parameters/quantities occurring in the suppressed polylogarithmic factors.} of
$$\widetilde{\Ocal}_{d} \left(\frac{s^2\kappa_A^2}{\epsilon} \right),$$
where $\epsilon > 0$ is the accuracy to which the solution is obtained. 
Although the HHL algorithm originally exhibited quadratic dependence on $\kappa_A$, its poly-logarithmic dependence on the dimension of the problem opened the doors for a potential exponential quantum speedup for solving QLSPs. 
Given the fundamental role that linear systems play in science and mathematics, the result in \cite{harrow2009quantum} led to a series of works focused on improving the performance of QLSAs, which we review next.

\subsection{Literature Review}\label{ss: LR}
The first enhancements to the complexity of the HHL algorithm were made in \cite{ambainis2012variable, childs2017quantum, subacsi2019quantum}, in which the dependence on $\kappa_A$ was improved from quadratic to almost linear (i.e., $\Ocal \left(\kappa_A \log \kappa_A \right)$). Further improvements came by developing better tools for Hamiltonian simulation. Indeed, the HHL algorithm is built on the idea of using Hamiltonian simulation to simulate the unitary $e^{-i Ht}$ on an input state $| \psi \rangle$ for some time $t$, and early work that extended the HHL algorithm relied primarily on the \textit{sparse-access input model} to query the elements of $H$. It was shown by \cite{low2019hamiltonianOpt} that one could perform optimal Hamiltonian simulation when the input matrix is given as a \textit{block-encoding}. 
Under this framework, one has access to unitary operators that have the coefficient matrix in their top-left block:
$$ U = \begin{pmatrix} A/\alpha & \cdot \\
\cdot & \cdot \end{pmatrix},$$
where $\alpha \geq 1$ is a normalization factor (recall that $U$ has operator norm 1, so normalization is necessary for general $A$). QLSAs based on the block-encoding framework were presented in \cite{vazquez2022enhancing, childs2017quantum, wossnig2018Quantum}, and the running time of these QLSAs was subsequently improved by \cite{chakraborty2018power} using variable-time amplitude  amplification \cite{ambainis2012variable} and estimation. When provided access to a quantum-read/classical-write RAM (QRAM), \cite{chakraborty2018power} obtains a QLSA with worst-case query complexity
$$\widetilde{\Ocal}_{d, \kappa_A, \frac{1}{\epsilon}}\left(\kappa_A \alpha \right).$$ 
It is additionally shown in~\cite{chakraborty2018power} how one can prepare block-encodings of matrices using various input models. We point out that using techniques based on quantum singular value transformation \cite{gilyen2019quantum}, one can design a QLSA with improved dependence on polylogarithmic factors compared to the algorithm in \cite{chakraborty2018power}. 

More recent developments have established QLSAs with optimal asymptotic dependence on both the condition number and the solution accuracy. 
Costa et al.~\cite{costa2022optimal} introduced an optimal-scaling QLSA based on a discrete adiabatic theorem and qubit\-ized quantum walks, achieving query complexity
\begin{equation*}
    \widetilde{\Ocal}_{\frac{1}{\epsilon}}\left(\kappa_A \right),
\end{equation*}
which is optimal in its combined dependence on $\kappa_A$ and $\epsilon$. 
This result initiated a more recent line of research aimed not only at attaining optimal asymptotic scaling, but also at reducing the constant factors hidden in the complexity bounds.
Dalzell~\cite{dalzell2024shortcut} proposed the \textit{shortcut} method, that uses quantum singular value transformation and eigenstate filtering techniques to obtain the same optimal $\Ocal(\kappa_A\log(1/\epsilon))$ scaling while providing substantially improved explicit constant-factor guarantees.
Costa et al.~\cite{costa2026constant} performed a detailed numerical comparison of optimal QLSAs, including the discrete-adiabatic quantum-walk and shortcut approaches, demonstrating that their practical performance can differ substantially despite having identical asymptotic complexity.
These developments suggest that the theoretical complexity of QLSAs has reached asymptotically optimal dependence on $\kappa_A$ and $\epsilon$, shifting increasing attention toward constant factors and the practical cost of employing QLSAs as subroutines in larger quantum algorithms.
These developments have opened a line of research investigating how QLSAs can be leveraged to design quantum algorithms that obtain speedups over classical algorithms in which the dominant operation is solving a linear system of equations; examples are \cite{mohammadisiahroudi2024efficient, mohammadisiahroudi2023inexact, augustino2023quantum}.
%

Besides QLSAs, the other main building block of the approach studied in this paper is iterative refinement (IR). IR was originally developed at a time when classical computers possessed capabilities just beyond the reach of today's quantum computers, and its initial conception was a framework for computing extended-precision solutions to linear systems of equations using low-precision arithmetic \cite{golub2013matrix, wilkinson1994rounding}. 
These ideas have since been extended to compute highly accurate (and at times, \textit{exact}) solutions to linear \cite{gleixner2020linear,gleixner2012improving, gleixner2016iterative, mohammadisiahroudi2025improvements, wu2023inexact} and semidefinite optimization problems \cite{mohammadisiahroudi2025quantum}, and recent works have investigated their adaptability to quantum algorithms. 
IR has recently been deployed to improve the practical and theoretical performance of inexact classical algorithms for solving linear systems \cite{carson2017new, carson2018accelerating}.

\begin{table*}[t]
\centering
\caption{Comparison of representative quantum linear system algorithms. Here,
$\kappa$ denotes the condition number, $s$ the sparsity of the matrix, and $\epsilon$ the target precision.}
\label{tab:QLSA_comparison}
\renewcommand{\arraystretch}{1.15}
\begin{tabular}{p{1.1cm}p{3.8cm}p{3.5cm}p{2.8cm}}
\toprule
\textbf{Paper} &
\textbf{Technique} &
\textbf{Query Complexity} &
\textbf{Input Model} \\
\midrule

\cite{harrow2009quantum}
&
Phase estimation
&
$\widetilde{\mathcal O}_{d}\!\left(\kappa^2 s^2/\epsilon\right)$
&
Sparse access
\\

\cite{ambainis2012variable}
&
Variable-time amplitude amplification
&
$\widetilde{\mathcal O}_{d}\!\left(\kappa s/\epsilon\right)$
&
Sparse access
\\

\cite{childs2017quantum}
&
Fourier/Chebyshev approximation
&
$\widetilde{\mathcal O}_{d,\kappa/\epsilon}\!\left(\kappa s\right)$
&
Sparse access
\\

\cite{chakraborty2018power}
&
QSVT
&
$\widetilde{\mathcal O}_{d,\kappa/\epsilon}\!\left(\alpha\,\kappa\right)$
&
Block encoding
\\

\cite{an2019quantum}
&
Time-optimal adiabatic quantum computing
&
$\widetilde{\mathcal O}_{d, \kappa/\epsilon}\!\left(s\kappa\right)$
&
Input oracle access
\\
\cite{dalzell2024shortcut}
&
Optimal QLSA + QSVT
&
$\widetilde{\mathcal O}_{d,1/\epsilon}\!\left(\alpha \,\kappa\right)$
&
Block encoding
\\

\bottomrule
\end{tabular}
\end{table*}

\subsection{Contributions}\label{ss: CT}

We develop an iterative refinement (IR) framework for obtaining a classical solution to the linear system problem \eqref{e:LS} using a quantum linear system algorithm (QLSA) followed by quantum state tomography (QTA). The main idea is to use QLSA and tomography only at a fixed precision in every refinement step, while progressively improving the accuracy of the classical solution through residual correction. As a consequence, the dependence of the overall algorithm on the final target precision is reduced from polynomial to polylogarithmic.

We analyze the proposed framework under three input models: QRAM, sparse-access oracles, and linear combinations of unitaries (LCU). In all three cases, Algorithm~\ref{LS:alg:iterative refinement for LS} requires only
$
\Ocal\!\left(
\log\!\left(\frac{\|b\|\kappa_A}{\epsilon}\right)
\right)
$
iterations to obtain a classical vector $x$ satisfying
$
\|x-A^{-1}b\|\leq\epsilon.
$
Since every quantum linear-system solve and tomography step is performed at a fixed precision $\xi<1$, the dependence on the final target precision $\epsilon$ appears only through polylogarithmic factors.

In the QRAM input model, for an $s$-sparse matrix $A$, our algorithm obtains an $\epsilon$-precise classical solution using

$$
\widetilde{\Ocal}_{d,\kappa_A,\frac{\|b\|}{\epsilon}}
\!\left(
\|A\|_F d\kappa_A^2
\right)
$$

accesses to QRAM and

$$
\widetilde{\Ocal}_{\kappa_A,\frac{\|b\|}{\epsilon}}
\!\left(
ds
\right)
$$

classical arithmetic operations.

Under the sparse-access input model, the algorithm requires

$$
\widetilde{\Ocal}_{d,\kappa_A,\frac{\|b\|}{\epsilon}}
\!\left(
\sqrt{s}d\,\kappa_A^2
\right)
$$

queries to the sparse-access oracles describing $A$, together with

$$
\widetilde{\Ocal}_{d,\kappa_A,\frac{\|b\|}{\epsilon}}
\!\left(
\sqrt{s}d^2\kappa_A^2
\right)
$$

additional quantum gates and

$$
\widetilde{\Ocal}_{\kappa_A,\frac{\|b\|}{\epsilon}}
\!\left(
ds
\right)
$$

classical arithmetic operations.

Finally, when the coefficient matrix is represented as a linear combination of unitaries
$
A=\sum_{i=0}^{K-1}\alpha_iU_i,
$
and a classical multiplication $U_i v$ can be performed using at most $C_U$ arithmetic operations, our algorithm requires
$$
\widetilde{\Ocal}_{d,\kappa_A,\frac{\|b\|}{\epsilon}}
\!\left(
\alpha d\kappa_A^2
\right)
$$
calls to the LCU input oracles,
$$
\widetilde{\Ocal}_{d,\alpha,\kappa_A,\frac{\|b\|}{\epsilon}}
\!\left(
\alpha d^2\kappa_A^2
\right)
$$
additional quantum gates, and
$$ \tilde{\Ocal}_{\kappa_A, \frac{ \| b \|}{\epsilon}} \left(K C_U\right)$$
classical arithmetic operations.

Thus, irrespective of the input model, iterative refinement removes the polynomial dependence on the final inverse precision that arises when a QLSA is followed directly by tomography. In particular, rather than increasing the precision of the quantum subroutines as $\epsilon$ decreases, our method executes only fixed-precision quantum circuits and achieves high accuracy through a logarithmic number of classical refinement steps.

We complement the theoretical analysis with numerical experiments using both quantum simulators and real quantum hardware. The experiments demonstrate geometric residual reduction in the noiseless setting and continued convergence in the presence of hardware noise. They also illustrate the principal resource advantage of the IR framework: high-precision classical solutions can be obtained through repeated execution of relatively shallow, low-precision quantum circuits instead of a single high-precision circuit whose depth grows rapidly with the desired accuracy.

This paper is an updated version of an earlier unpublished technical report \cite{mohammadisiaroudi2023exponentially}. Gily\'en et al.~\cite{gilyen24iterative} subsequently improved the dependence on $\kappa_A$ to linear by decomposing the solution into different eigenspaces and applying iterative refinement separately within those eigenspaces. To the best of our knowledge, the results in \cite{gilyen24iterative} are not available in the open literature or on online repositories, and therefore we do not include them in our quantitative comparison. Related ideas are discussed in \cite{chen2024quantum}.

The remainder of this paper is organized as follows. Section~\ref{s:prelim} introduces notation and assumptions. Section~\ref{s:qla} reviews the quantum linear-algebra tools, input models, QLSAs, and tomography algorithms used in our analysis. Section~\ref{sec:IR} presents the proposed iterative refinement algorithm and establishes its convergence. We then analyze its complexity under the QRAM, sparse-access, and LCU input models in Section~\ref{sec:complexity_analysis_input_models}, followed by numerical experiments on simulators and quantum hardware in Section~\ref{sec:num}. Section~\ref{s:con} concludes the paper.

\section{Notation and assumptions}\label{s:prelim}
For any integer $n$, let $[n] = \{0, \dots, n-1\}$. The $i$-th element of a vector $x \in \R{n}$ is denoted by $x_i$ for $i \in [n]$, and we denote the $ij$-th element of a matrix $M \in \R{m \times n}$ by $M_{ij}$ for $i \in [m]$ and $j \in [n]$. To refer to the $i$-th row of a matrix $M$, we write $M_{i, \cdot}$ and write $M_{ \cdot, j}$ when referring to its $j$-th column. Note that we use zero-based indices for vectors and matrices (i.e., the first element has index zero) because this considerably simplifies notation when dealing with coefficients of quantum states and the corresponding basis states. For a complex-valued vector $x \in \Cmbb^d$, its real component is $\Re(x) \in \R{d}$.
For $x \in \R{n}$, its amplitude encoding $\ket{x}$, is the $(\log n)$-qubit state 
$$ \ket{x} = \frac{1}{\|x\|} \sum_{j \in [n]} x_j \ket{j}.$$ 
We assume that all logarithms are base 2, and that the sizes of all spaces are powers of 2; this is without loss of generality as we can always pad vectors and matrices with zeroes.
The smallest and largest singular values of a matrix $A$ are denoted by $\sigma_{\min}(A)$ and $\sigma_{\max}(A)$, and the smallest and largest eigenvalues are denoted by $\lambda_{\min}(A),$ and $\lambda_{\max}(A)$. 
Unless otherwise specified, sparsity of matrices is intended column- and row-wise: a matrix $A$ is $s$-sparse if each column and row of $A$ has at most $s$ nonzero elements. The requirement of simultaneous column and row sparsity comes from the fact that for a general matrix $A$, before applying a QLSA we typically need to embed $A$ into a larger Hermitian matrix; column and row $s$-sparsity of $A$ ensures that the rows of the resulting matrix are $s$-sparse, which is assumed by the Hamiltonian simulation algorithms used in our QLSA of choice. For the quantum linear system solver we assume that $\sigma_{\max}(A) = 1$ and $\sigma_{\min}(A) = 1/\kappa_A$, as is common in QLSA literature. Up to rescaling, this is equivalent to assuming that we know the smallest and largest singular value of $A$. We keep this assumption throughout the paper, noting that it is sufficient to know singular values up to a constant prefactor.

All quantum algorithms discussed below are successful with probability at least $1-\delta$, with a cost that scales as $\Ocal(\operatorname{polylog}\frac{1}{\delta})$. Because this affects the complexity analysis only in polylogarithmic terms, to ease the notation we do not report this cost explicitly and state all results as if the algorithms were always successful: the dependence on $\delta$ should be implicitly assumed.

\section{Preliminaries on quantum linear algebra}
\label{s:qla}
The (classical) LSP is formally defined as follows.

\begin{definition}[LSP]\label{def: LSP}
Let $A \in \R{d \times p}$ be a matrix with condition number $\kappa_A$. Given $b\in \R{d}$, the linear system problem (LSP) is to determine $x\in\mathbb{R}^{p}$ such that $$Ax=b.$$ 
\end{definition}

When $d=p$ and $A$ is nonsingular, the LSP has the unique solution $x_*=A^{-1}b$. Classical direct methods such as Gaussian elimination, and factorization techniques require $\Ocal(d^3)$ arithmetic operations in the dense setting. Faster asymptotic algorithms are possible through fast matrix multiplication. Strassen's method reduces this cost to $\Ocal(d^{2.81})$ \cite{strassen1969gaussian}, while subsequent work has established matrix-multiplication exponents below $2.373$ \cite{coppersmith1987matrix, le2014powers}. For sufficiently sparse matrices, improved bounds are also possible; for example, \cite{peng2021solving} obtains a complexity of $\Ocal(d^{2.331645})$ under additional assumptions on sparsity and conditioning. Although in theory this has brought the computational complexity of matrix multiplication closer to $d^2$ than to $d^3$, these fast direct methods are typically not used in practice.

Inexact iterative methods can provide better dependence on the dimension by directly computing an approximation to $A^{-1}b$. The conjugate gradient (CG) method \cite{hestenes1952methods} solves symmetric positive definite systems with worst-case complexity
\[
    \Ocal\!\left(
        ds\sqrt{\kappa_A}\log(1/\epsilon)
    \right).
\]
For indefinite or nonsymmetric systems, related Krylov methods such as MINRES and GMRES are commonly used. Alternatively, one can symmetrize the system as
\[
    A^\top A x=A^\top b,
\]
although this squares the condition number and can increase the computational cost.

Polynomial approximation methods provide another route to solving linear systems. In particular, Chebyshev polynomials can approximate the inverse function on the spectrum of the coefficient matrix with degree
\[
    \operatorname{deg}(q)
    =
    \Ocal\!\left(
        \kappa_A\log(1/\epsilon)
    \right),
\]
leading to a comparable dependence on $\kappa_A$ and $\epsilon$ through repeated matrix-vector products; see, e.g., \cite[Section 6.11]{saad2003iterative}. For a more detailed discussion of classical iterative methods for linear systems, we refer the reader to \cite{saad2003iterative}.

We define the quantum analogue of the LSP following
\cite{dalzell2024shortcut}; see also
\cite{ambainis2012variable, chakraborty2018power, harrow2009quantum,
subacsi2019quantum} for earlier formulations.
\begin{definition}[QLSP]\label{def:QLSP}
Let $A \in \mathbb{R}^{d \times d}$ be a matrix whose nonzero singular values lie
in the interval $[\frac{1}{\kappa_A}, 1]$, and let $b \in \mathbb{R}^d$ be a vector
in the column space of $A$. Let $x \in \mathbb{R}^d$ denote the solution of minimum
norm to $Ax = b$, and let $\ket{b}$ and $\ket{x}$ be the amplitude encodings of $b$ and $x$, respectively.
Given $\epsilon \in (0,1)$, the Quantum Linear System Problem (QLSP) is to
find a (possibly mixed) quantum state $\rho$ satisfying
\begin{equation}\label{e:somma2}
    \frac{1}{2} \left\| \rho - \ket{x}\!\bra{x} \right\|_{\trace} \leq \epsilon,
\end{equation}
where $\| X \|_{\trace} = \trace\big( \sqrt{XX^{\dagger}} \big)$.
\end{definition}
Here, $\kappa_A$ is an upper bound on the condition number of $A$ restricted
to the orthogonal complement of its kernel; when $A$ is invertible, this is
the usual condition number, and the solution state reduces to
\begin{equation}\label{e:quantumSolutionState}
    \ket{x} = \frac{A^{-1} \ket{b}}{\left\| A^{-1} \ket{b} \right\|}.
\end{equation}

In order to efficiently prepare a state satisfying \eqref{e:somma2}, QLSAs
require access to $A$ and $b$ via some appropriate quantum data structure. As
we discuss next, the way in which the input is presented impacts the overall
running time of the QLSA in a number of ways, such as the cost of preparing a
state encoding the right-hand side and the subnormalization factor $\alpha$ of
the block-encoding of the coefficient matrix.

\subsection{Input models}
We conduct a worst-case overall running time analysis for our algorithm using the standard gate-based quantum circuit model, as well as a model in which our data is stored using quantum-read/classical-write RAM (QRAM). 
The impact of the assumed input model is not limited to the cost associated with preparing quantum linear systems of equations. 
Instead, in the context of our IR scheme, whether or not we have access to QRAM also influences the QLSA we use and the cost of performing quantum state tomography. 
To study the impact of the data structure on our algorithms' overall complexity, we consider three input models: the \textit{sparse access input model}, the \textit{QRAM input model}, and the \textit{linear combination of unitaries input model}. 

\subsubsection{Sparse access model}
\label{s:sparse}
The sparse access input model directly corresponds to the commonly used way of accessing sparse matrices in classical computation: it assumes access to the nonzero elements of the $s$-sparse matrix $A \in \R{d \times d}$. 

As in \cite{berry2015hamiltonian}, access to $A$ is provided by two oracles. 
The first of these two oracles is a unitary $O^{\text{sparse}}_A$ that calculates the function $\operatorname{index}: [n] \times [s] \to [n]$ for input $(i,\ell)$, giving the position of the $\ell$-th nonzero element in the $i$-th row of $A$:
$$ O^{\text{sparse}}_A \ket{i,\ell} = \ket{i, \operatorname{index}(i,\ell)}.$$
The second oracle is the unitary $O_A \ket{i, j, z} = \ket{i,j,z \oplus A_{ij}}$ that outputs a binary representation of the entries of $A$ for every $i,j \in [d]$.

\subsubsection{Quantum random access memory}
\label{s:qram}
We consider the quantum-accessi\-ble data structure presented in \cite{kerenidis2017quantum}. 
Quantum-accessible refers to the fact that it is a classical data structure accessible in coherent superposition when stored in QRAM. This allows us to implement certain quantum operations more efficiently than in the sparse-access input model without QRAM. 
Note that we only need classical write access to QRAM; we do not need to be able to write in a superposition.
As in \cite{chakraborty2018power}, our analysis relies on the standard assumption that a QRAM of size $w$ can be accessed at cost $\operatorname{polylog} (w)$, and that the $\Ocal (w)$ many gates required to implement the QRAM can be arranged in parallel to achieve a circuit depth of order $\Ocal \left(\operatorname{polylog} (w)\right)$, see \cite[Section 2.2]{chakraborty2018power}. Computational complexity of algorithms using QRAM is not the total number of gates, but a measure of time that relates to the depth of the QRAM circuit. To be precise, we report the computational complexity of such algorithms in terms of number of QRAM accesses, number of additional gates, and number of classical operations to initialize the QRAM, see, e.g., \cite[Ch.~5]{nannicini2025quantum}. In the QRAM input model, all classical data (i.e., $A$ and $b$, and any other r.h.s.\ vector that arises in the course of the IR scheme) is encoded in appropriate data structures stored in QRAM. The data structures can be efficiently computed and they are described in \cite{kerenidis2017quantum,chakraborty2018power}; the details are irrelevant for the purposes of this paper.
\subsubsection{Linear Combination of Unitaries}
\label{s:lcu}
A third input model commonly used to access a matrix $A \in \C{n \times n}$, first proposed in \cite{childs2012HamSimLCU}, assumes that it is provided as a linear combination of unitary matrices (LCU):
$$
A = \sum_{i=0}^{K-1} \alpha_i U_i.
$$
A common unitary basis to use, particularly when A is a Hamiltonian, is the Pauli basis: $U_i \in \{c \cdot P_1 \otimes P_2 \otimes ... \otimes P_n: c\in \{\pm1, \pm i\}, P_j \in \{I, X, Y, Z\} \}$.
Given any linear operator $A$ on $\log(n)$ qubits, a representation of A in the Pauli basis always exists, thus there always exists an LCU representation for $A$ (the sparsity of the representation affects the computational cost of solving linear systems in this input model). 

Next, we provide results from \cite{gilyen2019thesis, gilyen2019quantum, dalzell2024shortcut}, which establish how one can efficiently construct block-encodings of matrices using the sparse-access, QRAM, and LCU input models and use these block-encodings to solve linear systems with a quantum computer. 

\subsection{Complexity of the QLSP in the three input models}
In the three input models discussed above, the QLSA with the best asymptotic query complexity uses the block-encoding model. We define here the necessary notions. 
%
\begin{definition}[Block-encoding] \label{prop:qramblockenc}
Let $M \in \Cmbb^{2^n \times 2^n}$ be an $n$-qubit operator and $\epsilon > 0$. 
Then, an $(n + a)$-qubit unitary $U$ is an $(\alpha, a, \epsilon)$-block-encoding of $M$ if 
$ U = \begin{pmatrix}
\tilde{M} & \cdot \\
\cdot & \cdot
\end{pmatrix}$,
such that 
$\| \alpha \tilde{M} - M \| \leq  \epsilon.$
\end{definition}
If $\epsilon=0$, then $U$ may simply be called an $(\alpha, a)$-block-encoding of $M$. In the above definition, $\alpha \geq 1$ is called the subnormalization factor. $U$ uses $a$ extra qubits to implement $M$ up to total error $\epsilon$. 
The following result formalizes how to efficiently construct a block-encoding of a matrix stored in a quantum-accessible data structure. 

\begin{lemma}[Lemma 3.3.7 (2) in \cite{gilyen2019thesis}](Block-encoding matrices stored in QRAM)\label{lem:beQRAM}
Let $A \in \mathbb{C}^{2^n \times 2^n}$. If $A$ is stored in a quantum-accessible data structure, then there exists unitaries $U_R$ and $U_L$ that can be implemented with $\Ocal (\textup{poly}(n \log (1/\epsilon)))$ accesses to QRAM and additional gates, and $U_R^{\dagger} U_L$ is a $(\| A \|_F, n + 2, \epsilon)$-block-encoding of $A$. 
\end{lemma}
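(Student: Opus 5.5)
The construction follows the Kerenidis--Prakash state-preparation idea. We build two isometries whose inner products reproduce the normalized entries $A_{ij}/\|A\|_F$, and then read off the top-left block of $U_R^{\dagger}U_L$. Let $N=2^n$. The quantum-accessible data structure of \cite{kerenidis2017quantum} stores, for each row $i$, a binary tree whose leaves hold $|A_{ij}|^2$ together with the sign or phase of $A_{ij}$, and whose internal nodes hold partial sums of squared magnitudes. A further tree stores the row norms $\|A_{i,\cdot}\|^2$, with root $\|A\|_F^2$. Using these trees and controlled rotations computed from the stored partial sums, I would implement two maps:
\begin{align*}
U_L:&\ \ket{0}\ket{j} \mapsto \ket{\psi_j}=\frac{1}{\|A_{\cdot,j}\|}\sum_i A_{ij}\ket{i}\ket{j} \quad\text{(or the row-based analogue)},\\
U_R:&\ \ket{0}\ket{i} \mapsto \ket{\phi_i}=\frac{1}{\|A\|_F}\sum_j \|A_{\cdot,j}\|\ket{i}\ket{j}.
\end{align*}
Each map descends one tree of depth $n$. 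At each level it needs a QRAM query for the two child weights, an arithmetic circuit computing the rotation angle $\arccos\sqrt{w_{\text{left}}/w_{\text{parent}}}$, one controlled $R_y$, and an uncomputation. The cost is therefore $\Ocal(n)$ QRAM accesses plus $\mathrm{poly}(n,\log(1/\epsilon))$ gates, provided angles are computed to $\Ocal(\log(n/\epsilon))$ bits. The register layout uses $n$ system qubits, $n$ ancilla qubits for the second index, and two further ancillas, one flag and one for the phase or sign kickback. This gives the $n+2$ count, though with a different convention I would rearrange registers so that only $n+2$ extra qubits appear.

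The key identity comes next. Arrange the registers so that both $U_L\ket{0^{n+2}}\ket{j}$ and $U_R\ket{0^{n+2}}\ket{i}$ live in the same $2n$-qubit space, with the product basis $\ket{i}\ket{j}$ shared between them. Then
\[
\bra{0}\bra{i}U_R^{\dagger}U_L\ket{0}\ket{j}=\langle\phi_i|\psi_j\rangle=\frac{\|A_{\cdot,j}\|}{\|A\|_F}\cdot\frac{A_{ij}}{\|A_{\cdot,j}\|}=\frac{A_{ij}}{\|A\|_F}.
\]
Hence the top-left block of $U_R^{\dagger}U_L$ is exactly $A/\|A\|_F$, which is the exact $(\|A\|_F,n+2,0)$ statement. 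I would verify this carefully for complex entries: the phase of $A_{ij}$ must be attached in $U_L$ only, so that conjugation in $U_R^{\dagger}$ does not destroy it.

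The main obstacle is the error accounting. The rotation angles are only computed to finite precision, so the implemented maps are $\tilde U_L,\tilde U_R$ with $\|\tilde U_L-U_L\|,\|\tilde U_R-U_R\|\le\eta$. The block then changes by at most $2\eta$ in operator norm, so $\|\,\|A\|_F\tilde M-A\|\le 2\eta\|A\|_F$. Because the definition measures error after scaling by $\alpha=\|A\|_F$, I would choose $\eta=\epsilon/(2\|A\|_F)$. That choice seems to introduce a $\log\|A\|_F$ term. To handle it, I would use the paper's normalization, under which $\|A\|_F\le\sqrt{N}\|A\|\le 2^{n/2}$, so that $\log(\|A\|_F/\epsilon)=\Ocal(n+\log(1/\epsilon))$ is absorbed into $\mathrm{poly}(n\log(1/\epsilon))$. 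The per-gate rotation error bound also needs an analysis of how errors accumulate along the $n$ tree levels, which is linear in $n$ by the triangle inequality.
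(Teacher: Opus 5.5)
Your proposal follows the same route as the cited result; the paper itself gives no proof, only the reference to Gily\'en's thesis. That result rests on the Kerenidis--Prakash construction: two tree-descent state preparations whose overlaps equal $A_{ij}/\|A\|_F$, with finite-precision rotation angles accounting for $\epsilon$. Several parts of your argument are correct as written. Your overlap computation and the bound $\|\tilde M - A/\|A\|_F\| \le 2\eta$ hold. So does your observation that the unitaries need precision of order $\epsilon/\|A\|_F$. That means the stated $\mathrm{poly}(n\log(1/\epsilon))$ cost implicitly uses a normalization such as $\|A\|\le 1$, which gives $\|A\|_F\le 2^{n/2}$ and makes the extra $\log\|A\|_F$ harmless.

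One point needs fixing. You describe the data structure as storing rows: a tree per row plus a tree of row norms. Your $U_L$ and $U_R$, however, use column states and column norms, which that structure does not supply. Your fallback ``or the row-based analogue'' does not work literally. Putting row states in $U_L$ and the row-norm state in $U_R$ gives $\bra{0}\bra{i}U_R^{\dagger}U_L\ket{0}\ket{j}=A_{ji}/\|A\|_F$, which is a block-encoding of $A^{T}$, not $A$. With row storage the roles must be exchanged:
\[
U_L\colon \ket{0}\ket{j}\mapsto \frac{1}{\|A\|_F}\sum_i \|A_{i,\cdot}\|\,\ket{i}\ket{j},
\qquad
U_R\colon \ket{0}\ket{i}\mapsto \ket{i}\,\frac{1}{\|A_{i,\cdot}\|}\sum_j \overline{A_{ij}}\,\ket{j}.
\]
This gives $\bra{0}\bra{i}U_R^{\dagger}U_L\ket{0}\ket{j}=A_{ij}/\|A\|_F$. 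Here the phase is attached to $U_R$, in conjugated form. So your rule that the phase must sit in $U_L$ only is specific to the column version. Alternatively, assume explicitly that the data structure stores columns, in which case your maps are fine as stated.

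Two smaller remarks. First, your maps use only the $n$ qubits of the index register as ancillas, since the angle workspace is uncomputed exactly. Tensoring with two idle qubits turns an $(\alpha,n,\epsilon)$-block-encoding into an $(\alpha,n+2,\epsilon)$ one, so you do not need the flag and phase qubits to justify the $+2$. Second, for a zero row or column the normalized state is undefined. Any choice works there, because that state is multiplied by a zero weight.
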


One can also block-encode matrices to which we have sparse oracle access. The next result from \cite{low2019hamiltonianOpt} describes this construction and the associated cost.
\begin{lemma}[Corollary 5 in \cite{low2019hamiltonianOpt}](Block-encoding via sparse access oracles)\label{lem:sparseBE}
Let $A \in \mathbb{C}^{2^n \times 2^n}$ and $\|A\| \le 1$. If we have sparse oracle access to $A$, then we can construct a $(\Theta(1), \textup{poly}(n), \epsilon)$-block-encoding of $A$ using $\Ocal (\sqrt{s} (s/\epsilon)^{o(1)})$ queries to the input oracles for $A$, and $\Ocal (\sqrt{s} (s/\epsilon)^{o(1)} \textup{poly}(n \log \frac{s}{\epsilon}))$ additional gates.
\end{lemma}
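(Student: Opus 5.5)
The block-encoding will be built by composing the two sparse-access oracles into a pair of state-preparation unitaries whose overlap matrix is $A$; this is the standard ``two isometries'' construction. For the stated $\sqrt{s}$ cost I would then replace its square-root-of-entries step by the more careful construction of \cite{low2019hamiltonianOpt}.

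As a warm-up, I would first establish the naive $(s,\Ocal(n),\epsilon)$-block-encoding, which shows where the subnormalization comes from. Define
$$U_L\ket{0}\ket{i}=\ket{\psi_i}=\frac{1}{\sqrt{s}}\sum_{\ell\in[s]}\ket{i}\ket{\operatorname{index}(i,\ell)}\Bigl(\sqrt{A_{i,\operatorname{index}(i,\ell)}}\ket{0}+\sqrt{1-|A_{i,\operatorname{index}(i,\ell)}|}\ket{1}\Bigr),$$
and define the analogous $U_R\ket{0}\ket{j}=\ket{\phi_j}$ using column indices and complex conjugate square roots. These are implemented as follows:
\begin{enumerate}
\item a uniform superposition over $\ell\in[s]$;
\item one call to $O^{\text{sparse}}_A$ to compute $\operatorname{index}(i,\ell)$;
\item one call to $O_A$ to write the entry in binary;
\item a controlled rotation computing the square root to $\Ocal(\log(s/\epsilon))$ bits, which costs $\operatorname{poly}(n\log(s/\epsilon))$ gates;
\item one call to $O_A$ to uncompute the binary entry.
\end{enumerate}
Then $\bra{\psi_i}\phi_j\rangle=A_{ij}/s$, so $U_L^\dagger U_R$ is an $(s,\Ocal(n),\epsilon)$-block-encoding using $\Ocal(1)$ queries. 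The error $\epsilon$ comes from the finite-precision arithmetic and is controlled by choosing $\Ocal(\log(s/\epsilon))$ bits.

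The substantive step is improving the subnormalization from $s$ to $\Theta(1)$ while paying only $\sqrt{s}(s/\epsilon)^{o(1)}$ queries. This is where $\|A\|\le 1$ enters: it implies $\|A/s\|\le 1/s$. I would follow the approach of \cite{low2019hamiltonianOpt}, which amplifies the small block. The plan is to apply uniform singular value amplification, a QSVT polynomial approximating $x\mapsto \Theta(s)\,x$ on $[-1/s,1/s]$ while remaining bounded by $1$ on $[-1,1]$. A degree-$\Ocal(s\log(1/\epsilon))$ polynomial would give linear cost in $s$. To get $\sqrt{s}$ instead, I would split the entries so that the $\ket{0}$-branch amplitude in each isometry is balanced against the row $\ell_1$ norm rather than $s$. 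The amplification then only needs to boost a block of norm roughly $1/\sqrt{s}$, which requires degree $\Ocal(\sqrt{s}\,\operatorname{polylog})$. Refining the polylog overhead through a recursive, multi-level amplification yields the $(s/\epsilon)^{o(1)}$ factor. The gate count is the query count times the $\operatorname{poly}(n\log\frac{s}{\epsilon})$ arithmetic per query, plus the QSVT phase rotations.

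I expect the main obstacle to be this amplification step. In particular, it requires showing that the intermediate block really has norm $\Omega(1/\sqrt{s})$ for every $A$ with $\|A\|\le1$, and not merely $1/s$. If I cannot obtain that directly, I would fall back on citing Corollary 5 of \cite{low2019hamiltonianOpt} for the amplification. In that case I would verify only the error propagation: an $\epsilon'$-accurate input block-encoding passed through a degree-$D$ QSVT incurs error $\Ocal(D\sqrt{\epsilon'})$, so taking $\epsilon'=\operatorname{poly}(\epsilon/s)$ suffices.
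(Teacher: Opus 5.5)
\textbf{Gap: the $\sqrt{s}$ scaling is not established.} The paper does not prove this lemma; it imports it from Low and Chuang. Your fallback of citing Corollary~5 and checking only error propagation is therefore exactly the paper's treatment. Your $\Ocal(D\sqrt{\epsilon'})$ robustness bound for QSVT is the right one. Your warm-up $(s,\Ocal(n),\epsilon)$-encoding is the standard construction, up to a conjugation slip: as written, $\langle\psi_i|\phi_j\rangle=\overline{A_{ij}}/s$, so the conjugate belongs on the row side. What your own argument does not supply is the $\sqrt{s}$, which is the entire content of the statement.

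The step that fails is ``balance the $\ket{0}$-branch against the row $\ell_1$ norm, then amplify a block of norm $1/\sqrt{s}$.'' The bound you flag as the obstacle is the easy part. By Cauchy--Schwarz and $\|A\|\le 1$, $\|A_{i,\cdot}\|_1\le\sqrt{s}\,\|A_{i,\cdot}\|_2\le\sqrt{s}$, and likewise for columns. So an $\ell_1$-balanced pair of isometries would give subnormalization $\Lambda_1\le\sqrt{s}$. Also, amplification needs an upper bound on the block, governed by the subnormalization, not a lower bound on the block's norm, which can be tiny if $\|A\|$ is.

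The real problem is cost. The sparse oracles only let you prepare the uniform superposition over the $s$ slots, whose good amplitudes scale like $\sqrt{|A_{ik}|/(s\|A\|_{\max})}$. Rescaling these to $\sqrt{|A_{ik}|/\Lambda_1}$ is amplitude multiplication of unknown amplitudes by $\gamma=\sqrt{s\|A\|_{\max}/\Lambda_1}$ on each side. That costs $\Theta(\gamma)$ queries, since it contains unstructured search as a special case. This cost is paid inside every one of the $\widetilde{\Ocal}(\Lambda_1)$ uses made by the outer uniform amplification.

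The total is $\widetilde{\Ocal}(\sqrt{s\|A\|_{\max}\Lambda_1})$, which is the $\|H\|_{\max}$/$\|H\|_1$-type scaling, not a spectral-norm one. Under only $\|A\|\le 1$ this is $s^{3/4}$. For example, take $A=\mathrm{diag}(1,W_s/\sqrt{s})$ with $W_s$ an $s\times s$ Hadamard matrix: then $\|A\|=\|A\|_{\max}=1$ and $\Lambda_1=\sqrt{s}$.

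Amplifying only partially does not help. A factor $\gamma'\le\gamma$ leaves subnormalization $s\|A\|_{\max}/\gamma'^2$, so the total cost is $s\|A\|_{\max}/\gamma'$, which is minimized at $\gamma'=\gamma$.

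So without the citation, your argument proves only $\widetilde{\Ocal}(s)$, or $\widetilde{\Ocal}(s^{3/4})$ with the balancing step. The ``recursive, multi-level amplification'' that is supposed to give $\sqrt{s}\,(s/\epsilon)^{o(1)}$ is asserted, not carried out. That missing mechanism is precisely what the cited corollary provides.
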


Lastly, it is straightforward to block-encode matrices to which we have an LCU decomposition. The next result from \cite{gilyen2019quantum} describes this construction and the associated cost.

\begin{lemma}[Specialization of Lemma 52 in \cite{gilyen2019quantum}](Block-encoding via a linear combination of unitaries)\label{lem:lcuBE}
Let $A = \sum_{i=0}^{K-1} \alpha_i U_i \in \Cmbb^{2^n \times 2^n}$ be a linear combination of $n$-qubit unitaries $U_i$ with coefficients $\alpha_i > 0$, and let $\alpha = \sum_{i=0}^{K-1} \alpha_i$. Assume that we have access to a state-preparation oracle $\mathrm{PREP}$ and a selection oracle $\mathrm{SEL}$ acting on a $\lceil \log K \rceil$-qubit ancilla register and the $n$-qubit system register:
\begin{align*}
     \mathrm{PREP} \ket{0}^{\otimes \lceil \log K \rceil} = \frac{1}{\sqrt{\alpha}} \sum_{i=0}^{K-1} \sqrt{\tilde{\alpha}_i} \ket{i},~\text{and}~
     \mathrm{SEL} = \sum_{i=0}^{K-1} \ket{i}\bra{i} \otimes U_i,
\end{align*}
where $\tilde{\alpha}_i$ approximate the true coefficients up to total error $\sum_{i=0}^{K-1} | \tilde{\alpha}_i - \alpha_i | \leq \epsilon$. Then
$$ U = \left( \mathrm{PREP}^{\dagger} \otimes I \right) \mathrm{SEL} \left( \mathrm{PREP} \otimes I \right) $$
is an $\left( \alpha, \lceil \log K \rceil, \epsilon \right)$-block-encoding of $A$, using a single use each of  $\mathrm{SEL}$, $\mathrm{PREP}$, and $\mathrm{PREP}^{\dagger}$, and $\Ocal(K)$ additional gates.
\end{lemma}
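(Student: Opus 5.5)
The plan is a direct matrix computation: first compute how $\mathrm{PREP}\otimes I$ acts on an ancilla-zero input, then apply $\mathrm{SEL}$, and finally project back onto the ancilla-zero subspace using $\mathrm{PREP}^{\dagger}$. Write $\Pi = \bra{0}^{\otimes \lceil \log K\rceil}\otimes I$ for the map that extracts the top-left block. The top-left block of $U$ is then $\tilde M = \Pi U \Pi^{\dagger}$.

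For any system state $\ket{\psi}$, the definition of $\mathrm{PREP}$ gives
\[
(\mathrm{PREP}\otimes I)\ket{0}\ket{\psi} = \frac{1}{\sqrt{\alpha}}\sum_i \sqrt{\tilde\alpha_i}\ket{i}\ket{\psi}.
\]
Applying $\mathrm{SEL}$ then yields
\[
\frac{1}{\sqrt{\alpha}}\sum_i \sqrt{\tilde\alpha_i}\ket{i}\,U_i\ket{\psi}.
\]
Projecting with $\bra{0}(\mathrm{PREP}^\dagger\otimes I)$ is the same as taking the inner product of the ancilla with $\mathrm{PREP}\ket{0}$. This gives
\[
\tilde M\ket{\psi} = \frac{1}{\alpha}\sum_i \sqrt{\tilde\alpha_i}\sqrt{\tilde\alpha_i}\,U_i\ket{\psi} = \frac{1}{\alpha}\sum_i \tilde\alpha_i U_i\ket{\psi}.
\]
Hence $\alpha\tilde M = \sum_i \tilde\alpha_i U_i$.

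One normalization point needs care. For $\mathrm{PREP}\ket{0}$ to be a unit vector we need $\sum_i\tilde\alpha_i=\alpha$. I would read the statement as taking $\alpha$ to be this normalization, or equivalently take $\tilde\alpha_i\ge 0$ with the stated total error. If $\alpha$ is kept as $\sum\alpha_i$, the discrepancy can be absorbed into the same $\epsilon$ budget.

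The error bound then follows from the triangle inequality and the fact that each $\|U_i\|=1$:
\[
\|\alpha\tilde M - A\| = \Big\|\sum_i(\tilde\alpha_i-\alpha_i)U_i\Big\| \le \sum_i|\tilde\alpha_i-\alpha_i| \le \epsilon .
\]
By the block-encoding definition, $U$ is therefore an $(\alpha,\lceil\log K\rceil,\epsilon)$-block-encoding of $A$.

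The resource count can be read off the formula for $U$. It uses $\mathrm{PREP}$, $\mathrm{SEL}$ and $\mathrm{PREP}^\dagger$ once each. The $\Ocal(K)$ additional gates account for implementing the state-preparation unitary on $\lceil\log K\rceil$ qubits with a standard amplitude-loading circuit, together with the ancilla bookkeeping.

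I expect the main obstacle to be this normalization subtlety, not the algebra. The statement's $\alpha$ and the normalization of $\mathrm{PREP}$ must be reconciled so that $\mathrm{PREP}$ is indeed unitary and the $\epsilon$ bound still holds.
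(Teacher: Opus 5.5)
Your proposal is correct and is the standard argument behind Lemma 52 of Gily\'en et al., which the paper cites rather than proves: the top-left block of $U$ is $\frac{1}{\alpha}\sum_i\tilde\alpha_i U_i$, and the triangle inequality with $\|U_i\|=1$ gives $\|\alpha\tilde M - A\|\le\epsilon$. The normalization worry is settled by the hypotheses, since $\mathrm{PREP}\ket{0}$ must be a unit vector and so $\sum_i\tilde\alpha_i=\alpha$; your fallback claim that a mismatch could be absorbed into the same $\epsilon$ is wrong, because rescaling adds $\bigl|\sum_i\tilde\alpha_i-\sum_i\alpha_i\bigr|\le\epsilon$ (e.g.\ $(\alpha_0,\alpha_1)=(\epsilon,1-\epsilon)$, $(\tilde\alpha_0,\tilde\alpha_1)=(0,1-\epsilon)$, $U_0=I$, $U_1=-I$ gives error exactly $2\epsilon$), but the proof does not need it.
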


With access to a unitary block-encoding of the coefficient matrix $A$, we can perform the necessary matrix arithmetic operations that arise in solving a linear system of equations in time that depends only polylogarithmically on the dimension of the problem. In particular, the following result from \cite{dalzell2024shortcut} shows that one can use a block-encoding of the coefficient matrix $A$ to solve a linear system. 

\begin{theorem}[Adapted from Theorem 4 in \cite{dalzell2024shortcut}]
\label{thm:qlsa}
(Complexity of QLSA)
Assume that $\|b\|=1$, that $b$ is in the column space of $A$, and that
all nonzero singular values of $A$ lie in the interval
$[\frac{1}{\kappa},1]$. Let $U_A$ be an $(\alpha,a)$-block-encoding of
$A$, and let $U_b$ be a state-preparation unitary for $\ket{b}$. Let $x$
denote the unique vector of minimum norm for which $Ax=b$. Fix
$\epsilon>0$. Then there is a quantum algorithm to solve the QLSP to precision
$\epsilon$ using an expected $Q$ queries to $U_A$, $U_A^\dagger$, and
their controlled versions, as well as $2Q$ queries to $U_b$, $U_b^\dagger$,
and their controlled versions, where
\begin{equation*}
    Q \leq
    56.0\alpha\kappa
    +1.05\alpha\kappa
    \ln\left(\frac{\sqrt{1-\epsilon^2}}{\epsilon}\right)
    +2.78\ln(\alpha\kappa)^3
    +3.17.
\end{equation*}
The number of additional gates is bounded by $\widetilde{\Ocal}_{d,1/\epsilon}(\alpha \kappa)$.
\end{theorem}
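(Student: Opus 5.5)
This theorem is an adaptation of Theorem 4 in \cite{dalzell2024shortcut}, so the proof will reduce to that result rather than rebuild the algorithm. The work is to check that our setting matches its hypotheses, and then to translate its precision parameter and its query and gate counts into the form stated here.

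\textbf{Step 1: match the hypotheses.} We have $\|b\|=1$, $b$ in the column space of $A$, and nonzero singular values of $A$ in $[1/\kappa,1]$. These are exactly the assumptions of Definition~\ref{def:QLSP}, which is the QLSP formulation of \cite{dalzell2024shortcut}. The block-encoding $U_A$ is an exact $(\alpha,a)$-block-encoding in the sense of Definition~\ref{prop:qramblockenc}, so $U_A/\alpha$ plays the role of the normalized matrix in the shortcut algorithm. The effective condition number of $A/\alpha$ is therefore $\alpha\kappa$.

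\textbf{Step 2: recall the algorithm structure.} The shortcut method applies a QSVT polynomial approximating $1/x$ to $U_A$ at a constant target accuracy, and then uses eigenstate filtering to drive the error down to $\epsilon$ in trace distance. Amplitude amplification boosts the success probability, and a variable-time schedule keeps the cost linear. The constant-accuracy stage contributes the $56.0\alpha\kappa$ term. The filtering stage, with degree $\Ocal(\alpha\kappa\log(1/\epsilon))$, contributes the $1.05\alpha\kappa\ln(\sqrt{1-\epsilon^2}/\epsilon)$ term. The $\ln(\alpha\kappa)^3$ term and the additive constant come from the remaining overhead, and I would cite these values directly from the original theorem.

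\textbf{Step 3: check the accounting.} Each round of amplitude amplification uses one application of $U_b$ and one of $U_b^\dagger$ for each use of the block-encoding circuit. This gives the $2Q$ count for state-preparation queries.

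\textbf{Step 4: bound the additional gates.} The QSVT phases act on the $a$ ancillas plus a constant number of extra qubits per query, so the gate overhead is $\Ocal(Q\cdot \mathrm{poly}(a,\log d))$. Computing the QSVT phase angles classically costs polylogarithmically in $1/\epsilon$. Together these give $\widetilde{\Ocal}_{d,1/\epsilon}(\alpha\kappa)$.

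\textbf{Main obstacle.} The delicate point is Step 1, specifically the normalization. The original theorem is stated for the normalized matrix, with its condition-number parameter $\kappa'$. I must verify that substituting $\kappa'=\alpha\kappa$ reproduces exactly the stated constants. I must also confirm that the original theorem's error criterion matches the trace-distance criterion \eqref{e:somma2}. If it instead uses $2$-norm distance between pure states, I will convert using $\frac12\|\rho-\ket{x}\!\bra{x}\|_{\trace}\le \|\ket{\tilde x}-\ket{x}\|$, which at most rescales $\epsilon$ by a constant. Any such rescaling is absorbed into the logarithmic term.
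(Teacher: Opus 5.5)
Your proposal is correct and takes the same route as the paper, which gives no separate argument and simply imports Theorem~4 of \cite{dalzell2024shortcut} under the substitution $\kappa\mapsto\alpha\kappa$. That substitution is valid for exactly the reason in your Step~1: $A/\alpha$ has nonzero singular values in $[1/(\alpha\kappa),1/\alpha]\subseteq[1/(\alpha\kappa),1]$, has the same column space as $A$, and has minimum-norm solution $\alpha x$, which gives the same state $\ket{x}$.

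You should, however, cut Steps~2--3 back to that citation, because they describe a different algorithm and so cannot serve as a check of the stated constants. The shortcut solver is built from kernel projections and reflections (eigenstate filtering) and explicitly avoids variable-time amplitude amplification. Its $56\kappa$ term is the cost of an $\Ocal(\kappa)$ estimate of $\|x\|$, not of a constant-accuracy $1/x$ polynomial.
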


\subsection{Quantum state tomography}\label{ss:tom}
To solve the LSP with a QLSA, we will require a classical estimate of $x$ from its amplitude encoding $\ket{x}$. This can be provided by a quantum state tomography algorithm (QTA), of which there are many. For the purposes of this paper, we want an algorithm that achieves low sample complexity (i.e., number of copies of the quantum state used in the process) as well as low quantum and classical computational complexity (i.e., gate count of the quantum circuits, and number of arithmetic operations for any classical pre- or post-processing). Taking advantage of the fact that we have access to a circuit that prepares the quantum state of interest, the current state of the art in terms of sample and computational complexities is given by the algorithms of van Apeldoorn et al.~\cite{van2023quantum}; other recent results with optimal sample complexity, e.g., \cite{guctua2020FastTomographyOptimal,pelecanos2025mixed}, rely on expensive classical operations that take time $\Ocal(d^3)$ for a $d$-dimensional quantum state, and would end up becoming a bottleneck for the IR scheme.

We are interested in obtaining a description of a quantum state with error at most $\epsilon$ measured with the Euclidean distance. \cite{van2023quantum} shows that $\Omega(d/\epsilon)$ copies of the quantum state are necessary for this task. We give a precise statement for the version of tomography that we use, achieving sample complexity linear in $d$ with or without QRAM. The gate complexity depends on the presence or absence of QRAM. Note that we only state the result for extracting the real part of the amplitudes; if we need the imaginary part as well (e.g., if the QLSA does not properly control the relative phase of the solution), we only need to apply the algorithm with extra phase gates that do not affect the asymptotic complexity.
\begin{theorem}[\cite{van2023quantum}]
\label{thm:euclidean_norm_tomo}
Let $\ket{\psi} = \sum_{j=0}^{d-1} v_j \ket{j}$ be a quantum state, $v \in \mathbb{C}^d$ the vector with elements $v_j$, and $U\ket{0} = \ket{\psi}$. There is a quantum algorithm that outputs $\tilde{v} \in \mathbb{R}^{d}$ such that $\|\Re(v) - \tilde{v}\|_2 \le \epsilon$ using $\Ocal(\frac{d}{\epsilon} \log d)$ applications of $U$. In the standard circuit model, the algorithm additionally uses $\Ocal(\frac{d^{2}}{\epsilon} \log \frac{d}{\epsilon})$ quantum gates. If QRAM is available, the gate complexity reduces to $\Ocal(\frac{d}{\epsilon} \log \frac{d}{\epsilon})$ gates and accesses to QRAM.
\end{theorem}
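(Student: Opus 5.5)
The plan is to reduce Euclidean-norm tomography of $\Re(v)$ to \emph{simultaneous} amplitude estimation of $d$ interference amplitudes. The goal is Heisenberg scaling $1/\epsilon$ rather than the $1/\epsilon^2$ of naive sampling, while using only $\Ocal(d\log d)$ as the dimension overhead.

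\textbf{Step 1 (interference circuit).} Build a Hadamard-test-type circuit. Put a control qubit in $\ket{+}$. Conditioned on it being $\ket{0}$, prepare $\ket{\psi}=U\ket{0}$; conditioned on it being $\ket{1}$, prepare a known reference state $\ket{\phi}=\sum_j r_j\ket{j}$. Then apply a Hadamard to the control. The amplitude of $\ket{0}\ket{j}$ is $(v_j+r_j)/2$, so the probability of outcome $(0,j)$ is
\[
p_j=\tfrac14\bigl(|v_j|^2+2r_j\Re(v_j)+r_j^2\bigr),
\]
and similarly for outcome $(1,j)$ with a minus sign. Subtracting the two gives $p^{(0)}_j-p^{(1)}_j=r_j\Re(v_j)$. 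So the real part is linear in measurable quantities, and the $|v_j|^2$ terms cancel.

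\textbf{Step 2 (choice of reference).} I would take $r_j$ adapted to a coarse estimate of $|v_j|$, for instance $r_j\propto\sqrt{\max(|\hat v_j|^2,1/d)}$, obtained from a preliminary $\Ocal(d\log d)$-sample computational-basis measurement. The aim is that each coordinate's estimation error scales like $\epsilon\, r_j$. The squared errors then sum to $\epsilon^2\sum_j r_j^2=\Ocal(\epsilon^2)$.

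\textbf{Step 3 (simultaneous estimation).} This is the heart of the argument. The interference circuit defines a probability oracle for the vector $(p_j)_j$. I would convert it to a phase oracle $\ket{j}\mapsto e^{i p_j}\ket{j}$, approximately, via block-encoding and QSVT at logarithmic overhead. Then I would apply multivariate phase estimation in the style of Jordan's gradient algorithm, or alternatively a Fourier-sampling scheme, to all $j$ in superposition. This should yield estimates of every $p_j$ with the error profile of Step 2 using $\Ocal(d\log d/\epsilon)$ calls to $U$. The $\log d$ factor comes from median-of-means amplification, so that all $d$ coordinates succeed simultaneously.

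\textbf{Main obstacle.} I expect the hardest part to be the error accounting in Step 3. The $1/\epsilon$ Heisenberg scaling must hold in aggregate over $d$ coordinates, without paying a $\sqrt d$ factor from converting $\ell_\infty$ guarantees to $\ell_2$. My first attempt would bound the per-coordinate variance by $\Ocal(r_j^2/T^2)$ after $T$ rounds, and then use $\sum_j r_j^2=\Ocal(1)$ with $T=\Ocal(d/\epsilon)$. If that only gives $\ell_\infty$-type control, I would fall back on splitting coordinates into dyadic magnitude classes and running the estimation per class.

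\textbf{Gate counts.} These follow from the cost per application of $U$ of preparing $\ket{\phi}$ and the $d$-dimensional Fourier or arithmetic registers. In the circuit model, loading the classical vector $r$ costs $\Ocal(d)$ gates per use, which gives $\Ocal(\tfrac{d^2}{\epsilon}\log\tfrac d\epsilon)$ gates overall. With QRAM, $\ket{\phi}$ is prepared with polylogarithmically many QRAM accesses, which reduces the count to $\Ocal(\tfrac d\epsilon\log\tfrac d\epsilon)$.
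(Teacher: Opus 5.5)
The paper gives no proof of this statement; it imports it from \cite{van2023quantum}. Your proposal therefore has to stand on its own, and as written it does not. The problem is the simultaneous-estimation step, which you rightly call the heart of the argument: it uses the wrong kind of oracle. A phase oracle diagonal in a single index register, $\ket{j}\mapsto e^{ip_j}\ket{j}$, does not let you learn all $d$ phases at once. Phase estimation applied to $\sum_j c_j\ket{j}$ returns, after measurement, one random $j$ together with an estimate of its phase. Recovering every coordinate this way takes $d$ separate runs. If run $j$ reaches precision $\epsilon_j$ at cost about $1/\epsilon_j$, the constraint $\sum_j\epsilon_j^2\le\epsilon^2$ forces a total cost $\sum_j 1/\epsilon_j\ge d^{3/2}/\epsilon$. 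Working with the probabilities $p_j$ instead of amplitudes does not help. Jordan's algorithm needs a different object: a phase oracle on $d$ parallel registers, one per coordinate, $\ket{x_0,\dots,x_{d-1}}\mapsto e^{iN\langle x,\Re(v)\rangle}\ket{x_0,\dots,x_{d-1}}$. Its kickback factorizes across the registers, so an inverse QFT on each register reads out all coordinates in one run. Constructing this oracle from $U$ is the missing idea. One way uses your Hadamard test, but with a reference controlled by the grid register instead of a fixed classical vector. Prepare $\frac{1}{\sqrt d}\sum_j\ket{j}$ and rotate an ancilla by an angle determined by $x_j$, the register selected by the index $j$. Interfering this with $\ket{\psi}$ block-encodes $\langle x,v\rangle/\sqrt d$, whose real part $\langle x,\Re(v)\rangle/\sqrt d$ is isolated as in your Step~1. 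Selecting $x_j$ by $j$ is exactly the indexed-SWAP (QRAM-like) operation mentioned right after the theorem. That operation, not loading a fixed reference vector $r$, is what costs $\Ocal(d)$ gates per use in the circuit model and only polylogarithmically many with QRAM.

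Your error accounting cannot close either. Suppose $T$ counts uses of $U$. A per-coordinate variance of $\Ocal(r_j^2/T^2)$ with $\sum_j r_j^2=\Ocal(1)$ would give $\ell_2$ error $\Ocal(1/T)$ from $\Ocal(T)$ uses, i.e.\ $\Ocal(1/\epsilon)$ uses in total. That contradicts the $\Omega(d/\epsilon)$ lower bound quoted just before the theorem. So the choice $T=\Ocal(d/\epsilon)$ is not derived from any mechanism, and the proposal never identifies where the factor $d$ comes from. In the Jordan-type argument it is transparent:
\begin{itemize}
\item The block-encoding above has subnormalization $\sqrt d$, matching $\|\Re(v)\|_1\le\sqrt d$, so resolution $\eta$ in every coordinate costs $\Ocal(\sqrt d/\eta)$ uses of $U$.
\item The resulting error is a uniform additive $\ell_\infty$ error $\eta$, so $\ell_2$ error $\epsilon$ requires $\eta=\epsilon/\sqrt d$.
\item $\Ocal(\log d)$ repetitions with coordinatewise medians make all coordinates succeed simultaneously (your source for the $\log d$ is right).
\end{itemize}
Together these give $\Ocal(\frac{d}{\epsilon}\log d)$ uses in total. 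The adaptive reference of Step~2 plays no role. Already in the sampling version of your Hadamard test, consider the estimators $(\hat p^{(0)}_j-\hat p^{(1)}_j)/r_j$ of $\Re(v_j)$. Their summed variance after $n$ shots is
\[
\frac1n\Bigl(\tfrac12\sum_j |v_j|^2/r_j^2+\tfrac d2-\|\Re(v)\|^2\Bigr)\ge\frac{d/2-1}{n}
\]
for every choice of $r$. Reweighting the reference therefore cannot be what produces the saving.
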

In fact, \cite{van2023quantum} uses indexed-SWAP gates, a QRAM-like operation that can be implemented with $\widetilde{\Ocal}(d)$ gates when acting on $d$ qubits, see \cite{van2023quantum} for details. Since we assume access to QRAM we also assume access to indexed-SWAP gates. (Alternatively, a QRAM of very large size would achieve the same result without indexed-SWAP gates.) We note that if $\ket{\psi}$ is a real vector (up to global phase), gate complexity $\widetilde{\Ocal}(\frac{d}{\epsilon^2})$ can be achieved in the standard gate model without QRAM, at the cost of using $\widetilde{\Ocal}(\frac{d}{\epsilon^2})$ applications of $U$ (i.e., a worsening of a factor $1/\epsilon$) \cite{gilyen24iterative}.

\subsection{A quantum oracle for the LSP}\label{LS:s:oracle}
Here, we define an oracle $O_{LS}$ that uses a QLSA, followed by QTA and rescaling of the solution vector, to obtain a classical solution to the LSP.  We will use this oracle extensively in the IR scheme presented in Section \ref{sec:IR}.

\begin{definition}[Quantum linear systems oracle]\label{LS:def:oracle}
Let $A \in \mathbb{R}^{d \times d}$ have singular values in $[1/(\alpha\kappa), 1/\alpha]$ for some $\alpha,
\kappa \ge 1$. Let $b$ in $\mathbb{R}^d$ be such that $||b|| = 1$ and $b$ is in the column space of $A$. Let $\xi \in (0,1)$. The quantum linear systems oracle
$O_{LS}(A, b, \xi)$ returns a classical vector
$\bar{x} \in \mathbb{R}^d$ satisfying
$$\left\| A \bar{x} - b \right\| \leq \xi.$$
\end{definition}

The next result bounds the query complexity of implementing $O_{LS}$.

\begin{theorem}[Complexity of $O_{LS}$]\label{LS:theo:oracle-complexity}
Let $A, b$ and $\xi$ be as in Definition \ref{LS:def:oracle}, let $U_A$ be a $(1,a)$-block encoding of $A/\alpha$ and let $U_b$ be a
state-preparation unitary for $\ket{b}$. Let $x$ denote the unique
vector of
minimum norm for which $Ax = b$. Then we can implement $O_{LS}(A, b, \xi)$ using
$$\widetilde{\Ocal}_{ d, \kappa_A, \tfrac{1}{\xi}}\left( \frac{\alpha \, d \, \kappa_A^2}{\xi} 
\right)$$
queries to $U_A$ and $U_b$, and quantum gates. In the standard circuit model without QRAM, the algorithm additionally uses $\widetilde{\Ocal}_{d, \kappa_A, \tfrac{1}{\xi} }\left(\frac{\alpha d^{2} \kappa_A^2}{\xi} \right)$ quantum gates.
\end{theorem}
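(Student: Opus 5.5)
The plan is to implement $O_{LS}(A,b,\xi)$ in three stages: run the QLSA of Theorem~\ref{thm:qlsa} to a constant-factor-scaled precision, apply the tomography of Theorem~\ref{thm:euclidean_norm_tomo} to the resulting preparation circuit, and then classically rescale the unit vector by an estimate of $\|x\|$. The error must be tracked in the residual $\|A\bar x-b\|$, so each Euclidean error on $x$ is converted using $\|A\|\le 1/\alpha$.

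\textbf{Step 1: reduction to the normalized setting.} Write $A=A'/\alpha$. Then $A'$ has singular values in $[1/\kappa,1]$, and $U_A$ is a $(1,a)$-block-encoding of $A'$ up to the scaling. Since $x=A^{-1}b=\alpha A'^{-1}b$, we have
\[
\|x\|\le \alpha\kappa_A
\quad\text{and}\quad
\|x\|\ge \frac{\alpha\|b\|}{\|A'\|}\ge \alpha .
\]
Applying Theorem~\ref{thm:qlsa} to $A'$ with precision $\epsilon_1$ gives a state $\rho$ that is $\epsilon_1$-close in trace distance to $\ket{x}$. This costs $\Ocal(\kappa_A\log(1/\epsilon_1))$ queries to $U_A$ and $U_b$, plus polylog overhead.

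\textbf{Step 2: tomography.} We need a unitary preparing the state, and the QLSA gives one: a purified or coherent version of the circuit whose output register, restricted to the ``success'' flag, is close to $\ket{x}$. I would absorb the success amplitude via the fixed-point or amplitude-amplified version implicit in the expected-query bound of Theorem~\ref{thm:qlsa}. Trace distance $\epsilon_1$ implies Euclidean distance $\Ocal(\sqrt{\epsilon_1})$ up to a global phase; the phase is controlled by the QLSA, or handled with the extra phase gates remark. Theorem~\ref{thm:euclidean_norm_tomo} with precision $\epsilon_2$ then yields $\tilde v$ with $\|\tilde v-\ket{x}\|\le \epsilon_2+\Ocal(\sqrt{\epsilon_1})$. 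This uses $\Ocal(\frac{d}{\epsilon_2}\log d)$ applications of the circuit, so the total query count is $\widetilde{\Ocal}(d\kappa_A/\epsilon_2)$, and the gate overhead is $d^2/\epsilon_2$ without QRAM.

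\textbf{Step 3: norm estimation and the main obstacle.} The main obstacle is recovering the scale. I would avoid estimating $\|x\|$ quantumly and instead set $\bar x=\tau\tilde v$, where $\tau=\arg\min_t\|tA\tilde v-b\|$. This requires one classical matrix-vector product and is at least as good as using the true norm, so
\[
\|A\bar x-b\|\le \big\|\|x\|A\tilde v-b\big\|=\|x\|\,\|A(\tilde v-\ket{x})\|\le \frac{\|x\|}{\alpha}\,\|\tilde v-\ket{x}\|\le \kappa_A\big(\epsilon_2+\Ocal(\sqrt{\epsilon_1})\big).
\]
Choosing $\epsilon_2=\xi/(2\kappa_A)$ and $\epsilon_1=\Theta(\xi^2/\kappa_A^2)$ gives residual at most $\xi$. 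The $\epsilon_1$ choice enters only logarithmically. The query count becomes $\widetilde{\Ocal}(d\kappa_A\cdot\kappa_A/\xi)=\widetilde{\Ocal}(d\kappa_A^2/\xi)$, and the factor $\alpha$ in the stated bound arises from the block-encoding subnormalization in Theorem~\ref{thm:qlsa} when it is applied in terms of $A$'s own normalization. The gate counts follow in the same way from the two regimes of Theorem~\ref{thm:euclidean_norm_tomo}. If the classical product with $A$ is disallowed inside the oracle, a fallback is to estimate $\|x\|$ by amplitude estimation of the QLSA success probability to relative precision $\xi/\kappa_A$, which costs at most the same order.
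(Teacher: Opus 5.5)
Your proof is correct at the paper's level of rigor, and its skeleton matches the paper's: prepare $\ket{x}$ with the QLSA at a precision that enters only logarithmically, run tomography at precision $\Theta(\xi/\kappa_A)$ in the worst case, multiply the number of tomography samples by the per-sample QLSA cost, and use $\|x\|\le\alpha\kappa_A$.

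The genuine difference is how you recover the scale.
\begin{itemize}
\item The paper estimates $\|x\|$ quantumly: it refines the QLSA's constant-factor estimate by amplitude estimation to relative precision $O(\xi)$. It uses this estimate twice, to set the adaptive tomography precision $\alpha\xi/(2\|x\|)$ and to rescale $\bar x=\bar\nu\bar v$. It then bounds the two terms $\bar\nu A(\bar v-\ket{x})$ and $(\bar\nu-\|x\|)A\ket{x}$ separately.
\item Your one-dimensional least-squares choice $\tau=\langle A\tilde v,b\rangle/\|A\tilde v\|^2$ is never worse than the true norm, so only the directional error survives. The error analysis becomes one line, needs no knowledge of $\|x\|$, is insensitive to the overall normalization of $\tilde v$, and saves the (lower-order) amplitude-estimation queries.
\end{itemize}

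The price is that $O_{LS}$ is no longer a purely quantum procedure in terms of $U_A$ and $U_b$. You need $A$ and $b$ classically, plus one $\Ocal(\mathrm{nnz}(A))$ (or $\Ocal(KC_U)$) matrix-vector product per call, and these resources are not counted in the statement. Inside the IR scheme this costs nothing asymptotically: $\tilde A$ and $r^{(k)}$ are classical, and a product of the same cost is formed every iteration anyway, so the final complexity theorems are unaffected. The paper's route, for its part, keeps the oracle self-contained. It also gives the instance-dependent tomography count $\Ocal(d\|x\|\log d/(\alpha\xi))$, which beats the worst case when $\|x\|\ll\alpha\kappa_A$.

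There is one bookkeeping fix. In Step 1 the QLSA cost should already read $\Ocal(\alpha\kappa_A\log(1/\epsilon_1))$, not $\Ocal(\kappa_A\log(1/\epsilon_1))$. As the paper's proof uses it, the top-left block of $U_A$ is the matrix with singular values in $[1/(\alpha\kappa_A),1/\alpha]$. So $U_A$ is an $(\alpha,a)$-block-encoding of your $A'$, and Theorem~\ref{thm:qlsa} charges $\alpha\kappa_A$ per preparation. This is exactly where the paper's factor $\alpha$ comes from; your closing sentence only patches it in after the fact.

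A minor point on your fallback: relative precision $O(\xi)$ for the norm already suffices, because the norm-error term equals $|\bar\nu-\|x\||/\|x\|$ exactly. Your choice of $\xi/\kappa_A$ is harmless, since its cost is still dominated by tomography, but it is unnecessary.
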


\begin{proof}
The oracle proceeds in four steps: estimate the solution norm $\| x \|$,
prepare the normalized solution state $\ket{x}$, obtain its classical description via tomography,
and rescale.

First, the QLSA of Theorem~\ref{thm:qlsa} learns $\| x \|$ up to a constant
multiplicative factor as part of its execution (see \cite{dalzell2024shortcut}), and this estimate can be
refined sufficiently to a relative precision of $\Ocal(\xi)$ using amplitude estimation at
an additional cost of $\Ocal\left( \alpha \kappa_A \log(1/\xi)/\xi \right)$
queries \cite[Section~5.4]{dalzell2024shortcut}; this cost is dominated by the tomography
step below.

Next, we apply Theorem~\ref{thm:qlsa} to prepare a state $\ket{x^\ast}$ with
$\| \ket{x^\ast} - \ket{x} \| \leq \frac{\alpha \xi}{2 \| x \|}$; call $T_{LS}\left( \frac{\alpha \xi}{2 \| x \|} \right)$ the number of queries to prepare the state. Then we
apply the tomography algorithm of Theorem~\ref{thm:euclidean_norm_tomo} to
the circuit preparing $\ket{x^\ast}$, obtaining a classical unit vector
$\bar{v}$ with $\| \bar{v} - \ket{x^\ast} \| \leq \frac{\alpha \xi}{2 \| x \|}$; call $T_{TO}\left( \frac{\alpha \xi}{2 \| x \|} \right)$ the number of applications of the circuit that prepares $\ket{x^\ast}$. The oracle returns $\bar{x} \coloneqq \bar{\nu} \bar{v}$, where $\bar{\nu}$ is the norm estimate.

To verify the error guarantee of Definition~\ref{LS:def:oracle}, decompose
$$A \bar{x} - b
= \bar{\nu} A \left( \bar{v} - \ket{x} \right)
+ \left( \bar{\nu} - \| x \| \right) A \ket{x}.$$
For the first term, using $\| A \| \leq 1/\alpha$ and $\| \bar{v} - \ket{x} \| \leq \frac{\alpha \xi}{2 \| x \|}$ we obtain
$\|\bar{\nu} A \left( \bar{v} - \ket{x} \right)\| \le \bar{\nu} \cdot \frac{1}{\alpha} \cdot \frac{\alpha \xi}{2 \| x \|} =
(1 + \Ocal(\xi))\, \xi$. For the second term, $A \ket{x} = b / \| x \|$
exactly, so $$\|\left( \bar{\nu} - \| x \| \right) A \ket{x}\| = | \bar{\nu} - \| x \| | / \| x \| = \Ocal(\xi).$$
Hence $\| A \bar{x} - b \| \leq (1 + \Ocal(1))\, \xi$, and adjusting the
constants in the tolerances yields
$\| A \bar{x} - b \| \leq \xi$.

For the complexity, the total number of applications of $U_A$, $U_b$, their
inverses and controlled versions is the number of copies of $\ket{x^\ast}$ consumed by tomography times the cost for preparing $\ket{x^\ast}$:
\begin{align*}
T_{TO}\left( \tfrac{\alpha \xi}{2 \| x \|} \right) \cdot
T_{LS}\left( \tfrac{\alpha \xi}{2 \| x \|} \right)
&= \Ocal\left( \tfrac{d \, \| x \|}{\alpha \xi}
\log d \right) \cdot
\Ocal\left( \alpha \kappa_A
\log\left( \tfrac{\| x \|}{\alpha \xi} \right) \right) \\
&= \Ocal\left( \tfrac{\alpha \, d \, \kappa_A^2}{\xi} \cdot
\operatorname{polylog}\left( d, \kappa_A, \tfrac{1}{\xi} \right) \right),
\end{align*}
where we used the fact that $\|x\| \le \|A^{-1}\| \|b\| \le \alpha \kappa_A$. The number of gates follows from Theorems~\ref{thm:qlsa} and \ref{thm:euclidean_norm_tomo}. 
\end{proof}

Combining Lemmas~\ref{lem:beQRAM}, \ref{lem:sparseBE}, \ref{lem:lcuBE} with Theorem~\ref{LS:theo:oracle-complexity}, we obtain the following.
\begin{corollary}[Complexity of $O_{LS}$ in the QRAM input model]
\label{cor:ols-qram}
Let $A \in \R{d \times d}$ have nonzero singular values in $[1/\kappa_A, 1]$. Let $b \in \R{d}$ be such that $\|b\|=1$ and $b$ is in the column space of $A$. In the QRAM input model we can implement $O_{LS}(A, b, \xi)$ with
$$
    \widetilde{\Ocal}_{d,\kappa_A,\frac{1}{\xi}}
    \left(
        \frac{\| A \|_F d \kappa_A^2}{\xi}
    \right)
$$
accesses to QRAM and additional gates.
\end{corollary}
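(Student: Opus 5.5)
The plan is to instantiate Theorem~\ref{LS:theo:oracle-complexity} with the block-encoding supplied by Lemma~\ref{lem:beQRAM}, and then account for the cost of the state-preparation unitary $U_b$ and the imperfect block-encoding.

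\textbf{Step 1: block-encoding.} Since $A$ is stored in QRAM, Lemma~\ref{lem:beQRAM} gives a $(\|A\|_F, n+2, \epsilon_{be})$-block-encoding of $A$. It uses $\Ocal(\mathrm{poly}(n\log(1/\epsilon_{be})))$ QRAM accesses and gates, where $n=\log d$. Equivalently, this is a $(1,n+2)$-block-encoding of $A/\alpha$ with $\alpha=\|A\|_F$, up to error $\epsilon_{be}/\alpha$.

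We then pass to the normalization of Definition~\ref{LS:def:oracle}. Set $A' = A/\alpha$. Its singular values lie in $[1/(\alpha\kappa_A), 1/\alpha]$, since $\|A\|_F\ge \|A\|=1$ ensures $\alpha\ge 1$. Solving $A'y=b$ with residual $\|A'\bar y - b\|\le \xi$ is the same as producing $\bar x=\bar y/\alpha$ with $\|A\bar x-b\|\le\xi$. So we call the oracle $O_{LS}(A',b,\xi)$ and rescale classically; the rescaling costs $\Ocal(d)$ arithmetic operations.

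\textbf{Step 2: preparing $\ket{b}$.} Because $b$ is also stored in the QRAM data structure of \cite{kerenidis2017quantum}, $U_b$ can be implemented with $\operatorname{polylog}(d)$ QRAM accesses and gates. Theorem~\ref{LS:theo:oracle-complexity} uses $\widetilde{\Ocal}(\alpha d\kappa_A^2/\xi)$ queries to $U_A$ and $U_b$. Multiplying by the polylogarithmic per-query cost gives $\widetilde{\Ocal}_{d,\kappa_A,1/\xi}(\|A\|_F d\kappa_A^2/\xi)$ QRAM accesses.

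For tomography we use the QRAM branch of Theorem~\ref{thm:euclidean_norm_tomo}. Its extra gate count is linear in $d/\epsilon$ rather than quadratic, so it is absorbed into the same bound. The $\ket{x^\ast}$ precision there is $\alpha\xi/(2\|x\|)$ with $\|x\|\le\alpha\kappa_A$, which contributes only the factors already accounted for in the theorem's proof.

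\textbf{Step 3 (main obstacle): block-encoding error.} Theorem~\ref{thm:qlsa} assumes an exact block-encoding. I would handle this by a perturbation argument. Running the QLSA with a block-encoding of $\tilde A$, where $\|\tilde A-A\|\le\epsilon_{be}$, produces the solution state of $\tilde A$. Choose $\epsilon_{be}\le c\,\xi/\kappa_A^2$, which also keeps $\epsilon_{be}\le 1/(2\kappa_A)$. Then $\tilde A$ is invertible with condition number $\Ocal(\kappa_A)$, and the standard bound $\|\tilde A^{-1}-A^{-1}\|\le \|A^{-1}\|\|\tilde A^{-1}\|\epsilon_{be}=\Ocal(\kappa_A^2\epsilon_{be})$ gives $\|A\bar x-b\|\le \|\tilde A\bar x-b\|+\epsilon_{be}\|\bar x\|$. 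With $\|\bar x\|=\Ocal(\kappa_A)$, the extra term is at most a constant multiple of $\xi$.

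Because Lemma~\ref{lem:beQRAM} depends only polylogarithmically on $1/\epsilon_{be}$, this choice costs just polylogarithmic factors in $\kappa_A$ and $1/\xi$, which the $\widetilde{\Ocal}$ absorbs. Adjusting the constants in the tolerance then yields the claimed bound.
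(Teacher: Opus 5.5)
Your proposal is correct and follows the paper's route. The paper's proof is only the one-line combination of Lemma~\ref{lem:beQRAM} (subnormalization $\alpha=\|A\|_F$, polylogarithmic cost per access) with Theorem~\ref{LS:theo:oracle-complexity}. Your Step~3 spells out the block-encoding-error argument that the paper writes out only for the LCU model in Corollary~\ref{cor:ols-lcu}, using the same bound $\|A\bar x-b\|\le\|\tilde A\bar x-b\|+\|A-\tilde A\|\,\|\bar x\|$ with $\|\bar x\|=\Ocal(\kappa_A)$.

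One caveat: that perturbation step needs $A$ to be invertible, both to get $\sigma_{\min}(\tilde A)\ge 1/(2\kappa_A)$ and to apply Theorem~\ref{thm:qlsa} to $\tilde A$. So, like the paper, you implicitly restrict to nonsingular $A$, even though the statement allows singular $A$ with $b$ in its column space. That case would need a robustness argument for the QLSA itself rather than a perturbation of the linear system.
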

\begin{corollary}[Complexity of $O_{LS}$ in the sparse access input model]
\label{cor:ols-sparse}
Let $A \in \R{d \times d}$ have nonzero singular values in $[1/\kappa_A, 1]$. Let $b \in \R{d}$ be such that $\|b\|=1$ and $b$ is in the column space of $A$. In the sparse access input model we can implement $O_{LS}(A, b, \xi)$ with
$$
    \widetilde{\Ocal}_{d,\kappa_A,\frac{1}{\xi}}
    \left(
        \frac{\sqrt{s} d \kappa_A^2 }{\xi}(s \kappa_A/\xi)^{o(1)}
    \right)
$$
accesses to the sparse oracles of Section~\ref{s:sparse} and 
$
    \widetilde{\Ocal}_{d,\kappa_A,\frac{1}{\xi}}
    \left(
        \frac{\sqrt{s} d^2 \kappa_A^2}{\xi}
    \right)
$
additional gates.
\end{corollary}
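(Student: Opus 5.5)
The plan is to instantiate Theorem~\ref{LS:theo:oracle-complexity} with the block-encoding supplied by Lemma~\ref{lem:sparseBE}, and then convert query counts to the sparse-oracle model. Since the corollary assumes nonzero singular values of $A$ in $[1/\kappa_A,1]$, we have $\|A\|\le 1$, so Lemma~\ref{lem:sparseBE} applies. It yields a $(\Theta(1),\textup{poly}(n),\epsilon_{BE})$-block-encoding of $A$, hence a $(1,a,\epsilon_{BE})$-block-encoding of $A/\alpha$ with $\alpha=\Theta(1)$. The singular values of $A$ then lie in $[1/(\alpha\kappa),1/\alpha]$ with $\kappa=\Theta(\kappa_A)$ after absorbing constants. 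The hypotheses of Definition~\ref{LS:def:oracle} are therefore met up to constant rescaling, which affects neither the asymptotics nor the residual guarantee, since $\|A\bar x-b\|\le\xi$ is invariant once we rescale $\bar x$ accordingly.

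\textbf{Step 1 (handling $\epsilon_{BE}\neq 0$).} Theorem~\ref{LS:theo:oracle-complexity} is stated for exact block-encodings, so the block-encoding error has to be absorbed. The standard approach is to use that each query to an $\epsilon_{BE}$-approximate block-encoding perturbs the final state by at most $\Ocal(\epsilon_{BE})$ in operator norm. Over the $Q=\widetilde{\Ocal}(\kappa_A)$ queries in one run of the QLSA of Theorem~\ref{thm:qlsa}, the prepared state $\ket{x^\ast}$ therefore moves by at most $\Ocal(Q\epsilon_{BE})$. Choosing $\epsilon_{BE}=\Theta(\xi/(\kappa_A\cdot\mathrm{polylog}))$ keeps this within the $\frac{\alpha\xi}{2\|x\|}\ge \xi/(2\kappa_A)$ budget used in the proof of Theorem~\ref{LS:theo:oracle-complexity}, after halving the tolerances. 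Tomography consumes independent copies, so the per-copy error bound is all that is needed.

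I expect this to be the main obstacle. Strictly, the errors in $\ket{x^\ast}$ should be bounded so that tomography of the perturbed circuit still lands within the $\xi/(2\kappa_A)$ ball, and the amplitude-estimated norm $\bar\nu$ must also tolerate the perturbation. Both follow from the same per-run bound, so I would state that bound once and reuse it. The choice of $\epsilon_{BE}$ means $s/\epsilon_{BE}=\Ocal(s\kappa_A/\xi)$ up to polylogs, which is exactly the source of the $(s\kappa_A/\xi)^{o(1)}$ factor in the statement.

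\textbf{Step 2 (counting).} Theorem~\ref{LS:theo:oracle-complexity} gives $\widetilde{\Ocal}(\alpha d\kappa_A^2/\xi)=\widetilde{\Ocal}(d\kappa_A^2/\xi)$ uses of $U_A$ and $U_b$. Each use of $U_A$ costs $\Ocal(\sqrt{s}(s\kappa_A/\xi)^{o(1)})$ oracle queries and $\Ocal(\sqrt{s}(s\kappa_A/\xi)^{o(1)}\mathrm{poly}(n\log(s\kappa_A/\xi)))$ gates by Lemma~\ref{lem:sparseBE}, where $n=\log d$ makes the polynomial factor polylogarithmic. Multiplying gives the stated oracle count. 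For the gate count, we add the circuit-model tomography overhead of Theorem~\ref{thm:euclidean_norm_tomo}, namely $\widetilde{\Ocal}(\alpha d^2\kappa_A^2/\xi)$ from the final clause of Theorem~\ref{LS:theo:oracle-complexity}, to the block-encoding gates. This yields $\widetilde{\Ocal}(\sqrt{s}d^2\kappa_A^2/\xi)$, with the subpolynomial factor folded into the $\widetilde{\Ocal}$ as in the statement. The cost of $U_b$ is taken as given in the sparse model, or charged to the same budget, and is dominated by these terms.
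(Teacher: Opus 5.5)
Your overall plan (feed the $(\Theta(1),\textup{poly}(n),\epsilon_{BE})$-block-encoding of Lemma~\ref{lem:sparseBE} into Theorem~\ref{LS:theo:oracle-complexity} and multiply the per-query cost of $U_A$ into the query count) is how the paper obtains this corollary, and your Step~2 counting is fine. The gap is in Step~1.

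Your assertion that each query to an $\epsilon_{BE}$-approximate block-encoding perturbs the state by $\Ocal(\epsilon_{BE})$ in operator norm does not follow from the definition. An $(\alpha,a,\epsilon_{BE})$-block-encoding only controls the top-left block, so there is no exact unitary to run a hybrid argument against. In general the nearest exact block-encoding is only $\Ocal(\sqrt{\epsilon_{BE}})$ away: a $2\times 2$ rotation whose top-left entry $\cos\theta$ approximates $1$ has off-diagonal entries $\sin\theta\approx\sqrt{2\epsilon_{BE}}$. A linear bound needs $1-\|A/\alpha\|=\Omega(1)$ plus a dilation-perturbation argument that you have not supplied.

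Even granting a linear bound, your bookkeeping is inconsistent. With $Q=\widetilde{\Ocal}(\kappa_A)$ queries and a state budget of $\xi/(2\kappa_A)$, you need $\epsilon_{BE}$ of order $\xi/(\kappa_A^2\log(\kappa_A/\xi))$, not $\xi/\kappa_A$. So your choice is not what produces the $(s\kappa_A/\xi)^{o(1)}$ factor. Neither defect changes the final asymptotics, since $(s\kappa_A^{c}/\xi^{c'})^{o(1)}=(s\kappa_A/\xi)^{o(1)}$ for constants $c,c'$. Still, the step as written is not a proof.

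The paper avoids forward-error propagation altogether with a backward-error argument. It writes this out only in the proof of Corollary~\ref{cor:ols-lcu}, but it transfers verbatim to the sparse model. If $\tilde M$ is the top-left block of $U_A$, then $U_A$ is an \emph{exact} $(\alpha,a)$-block-encoding of $A'=\alpha\tilde M$, with $\|A'-A\|\le\epsilon_{BE}$. For $\epsilon_{BE}\le 1/(2\kappa_A)$, the matrix $A'$ is invertible with $\sigma_{\min}(A')\ge 1/(2\kappa_A)$ and condition number $\Ocal(\kappa_A)$. So Theorem~\ref{LS:theo:oracle-complexity} applies to $A'$ as stated, at precision $\xi/2$, and returns $\bar u$ with $\|\bar u\|=\Ocal(\kappa_A)$. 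Then
\[
\|A\bar u-b\|\le\|A'\bar u-b\|+\|A-A'\|\,\|\bar u\|\le\frac{\xi}{2}+\epsilon_{BE}\cdot\Ocal(\kappa_A)\le\xi
\]
for $\epsilon_{BE}=\Theta(\xi/\kappa_A)$. This needs no hybrid argument and no separate control of the tomography output or of the norm estimate $\bar\nu$. The resulting $s/\epsilon_{BE}=\Ocal(s\kappa_A/\xi)$ is what the $(s\kappa_A/\xi)^{o(1)}$ factor in the statement reflects. With Step~1 replaced by this argument, the rest of your proof goes through.
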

\begin{corollary}[Complexity of $O_{LS}$ in the LCU input model]
\label{cor:ols-lcu}
Let $A = \sum_{i=0}^{K-1} \alpha_i U_i \in \R{d \times d}$ have nonzero singular values in $[1/\kappa_A, 1]$, and let $\alpha = \sum_{i=0}^{K-1}|\alpha_i|$. Let $b \in \R{d}$ be such that $\|b\|=1$ and $b$ is in the column space of $A$. In the LCU input model we can implement $O_{LS}(A, b, \xi)$ with
$$
    \widetilde{\Ocal}_{d,\kappa_A,\frac{1}{\xi}}
    \left(
        \frac{\alpha d \kappa_A^2}{\xi}
    \right)
$$
accesses to the LCU oracles of Lemma~\ref{lem:lcuBE} and 
$\widetilde{\Ocal}_{d,\kappa_A,\frac{1}{\xi}}
    \left(
        \frac{\alpha d^2 \kappa_A^2}{\xi}
    \right)$
additional gates.
\end{corollary}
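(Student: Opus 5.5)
The statement to prove is the last of the three corollaries, Corollary~\ref{cor:ols-lcu}, for the LCU input model. The proof is a direct instantiation of Theorem~\ref{LS:theo:oracle-complexity} with the block-encoding of Lemma~\ref{lem:lcuBE}, followed by a rescaling check and a count of costs.

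\textbf{Step 1: build the block-encoding.} Lemma~\ref{lem:lcuBE} assumes positive coefficients, so first absorb the sign or phase of each $\alpha_i$ into $U_i$. Since $-U_i$ (or $c\,U_i$ with $|c|=1$) is still unitary and can be selected at no extra asymptotic cost, we may take $\alpha_i>0$ with $\sum_i \alpha_i = \alpha$. We also take the exact PREP, i.e.\ $\tilde{\alpha}_i=\alpha_i$. Lemma~\ref{lem:lcuBE} then gives an $(\alpha,\lceil\log K\rceil)$-block-encoding $U$ of $A$. It uses one call each to $\mathrm{SEL}$, $\mathrm{PREP}$ and $\mathrm{PREP}^\dagger$, plus $\Ocal(K)$ gates.

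\textbf{Step 2: match the hypotheses of Theorem~\ref{LS:theo:oracle-complexity}.} That theorem asks for a matrix $A'$ with singular values in $[1/(\alpha\kappa),1/\alpha]$ and a $(1,a)$-block-encoding of $A'/\alpha$. Set $A' = A/\alpha$; since $A$ has singular values in $[1/\kappa_A,1]$, $A'$ has singular values in $[1/(\alpha\kappa_A),1/\alpha]$.
\begin{itemize}
\item The unitary $U$ from Step 1 has top-left block $A/\alpha$, and $\alpha\ge1$ because $\alpha\ge\|A\|=1$.
\item So $U$ serves directly as the block-encoding the theorem requires, with the theorem's $\alpha$ equal to the LCU normalization $\alpha$.
\item The system $A'y=b$ has solution $y=\alpha x$. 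So we run $O_{LS}(A',b,\xi')$ and return $\bar{x}=\bar{y}/\alpha$.
\item This gives $\|A\bar{x}-b\| = \|A'\bar{y}-b\|\le\xi'$, so taking $\xi'=\xi$ meets the requirement of Definition~\ref{LS:def:oracle} for $A$.
\end{itemize}
Equivalently, the proof of Theorem~\ref{LS:theo:oracle-complexity} only uses $\|A'\|\le 1/\alpha$ and $\|A'^{-1}\|\le\alpha\kappa_A$, and both hold.

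\textbf{Step 3: count costs.} Theorem~\ref{LS:theo:oracle-complexity} uses $\widetilde{\Ocal}(\alpha d\kappa_A^2/\xi)$ queries to $U$ and to $U_b$. Each query to $U$, $U^\dagger$ or a controlled version costs $\Ocal(1)$ calls to the LCU oracles, since controlling PREP and SEL costs only a constant factor. This gives the stated oracle count.

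For gates, the tomography in the circuit model contributes $\widetilde{\Ocal}(\alpha d^2\kappa_A^2/\xi)$, which is the stated bound. The main obstacle I expect is the $\Ocal(K)$ additional gates per use of $U$. I would either absorb this term into the oracle accounting, treating the gates of Lemma~\ref{lem:lcuBE} as part of implementing $U$, or note that it contributes $\widetilde{\Ocal}(K\alpha d\kappa_A^2/\xi)$. The latter is dominated whenever $K\le d$. Since the corollary's statement hides it, I would adopt the former convention, which treats the LCU block-encoding as oracle access, and state it explicitly in the proof.
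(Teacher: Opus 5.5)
Your argument is complete only under the assumption you add in Step 1, that $\mathrm{PREP}$ is exact ($\tilde{\alpha}_i=\alpha_i$). That assumption removes exactly the difficulty the paper's proof of this corollary deals with. The amplitudes $\sqrt{\alpha_i/\alpha}$ are arbitrary reals, so in general $\mathrm{PREP}$ can only be realized to some coefficient accuracy $\delta>0$, at cost polylogarithmic in $1/\delta$. Lemma~\ref{lem:lcuBE} accordingly allows approximate coefficients $\tilde{\alpha}_i$, and then yields only an $(\alpha,\lceil\log K\rceil,\delta)$-block-encoding of $A$. Theorem~\ref{LS:theo:oracle-complexity}, however, assumes an exact block-encoding. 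So you cannot feed $U$ into it as a block-encoding of $A/\alpha$, as you do in Step 2.

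The missing step is a perturbation argument. Regard $U$ as an exact $(1,\lceil\log K\rceil)$-block-encoding of some matrix $A_\delta$ with $\|A_\delta-A/\alpha\|\le\delta/\alpha$, and choose $\delta\le\xi/(8\kappa_A)$. By Weyl's inequality, $A_\delta$ still has singular values in $[\Theta(1/(\alpha\kappa_A)),\Theta(1/\alpha)]$. So Theorem~\ref{LS:theo:oracle-complexity} applied to $A_\delta$ with precision $\xi/2$ returns $\bar{u}$ with $\|A_\delta\bar{u}-b\|\le\xi/2$ and $\|\bar{u}\|=\Ocal(\alpha\kappa_A)$. Hence
\[
\Bigl\|\tfrac{A}{\alpha}\bar{u}-b\Bigr\|
\le\|A_\delta\bar{u}-b\|+\Bigl\|\tfrac{A}{\alpha}-A_\delta\Bigr\|\,\|\bar{u}\|
\le\tfrac{\xi}{2}+\tfrac{\xi}{8\alpha\kappa_A}\,\Ocal(\alpha\kappa_A)\le\xi ,
\]
and returning $\bar{u}/\alpha$, as in your rescaling, gives the guarantee for $A$.

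The key point is that $\delta$ must scale like $\xi/\kappa_A$ rather than $\xi$, because the matrix error is multiplied by the solution norm $\Ocal(\alpha\kappa_A)$. Since the cost of $\mathrm{PREP}$ is only polylogarithmic in $1/\delta$, this is absorbed into $\widetilde{\Ocal}_{d,\kappa_A,1/\xi}$, and your Step 3 counts go through unchanged. The rest of your proposal is fine. In particular, absorbing signs into the $U_i$ and fixing a convention for the $\Ocal(K)$ gates of Lemma~\ref{lem:lcuBE} are points the paper's proof leaves implicit.
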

\begin{proof}
This proof is slightly more involved than the previous ones because we need to account for the error in the construction of the block-encoding
via Lemma~\ref{lem:lcuBE}. We choose a coefficient error $\delta \leq \frac{\xi}{8 \kappa_A}$ in the unitary $\textup{PREP}$, so that the implemented unitary is an exact $(1, \lceil \log K \rceil)$-block-encoding of a perturbed matrix $A_\delta$, satisfying $\| A_\delta - A/\alpha \| \leq \delta / \alpha \leq \frac{\xi}{8\alpha\kappa_A}$. We apply Theorem~\ref{LS:theo:oracle-complexity} to $A_\delta$ with precision $\xi/2$. This returns a vector $\bar{u}$ with bounded norm $\|\bar{u}\| = \Ocal(\alpha \kappa_A)$. We have:
$$ \| \frac{A}{\alpha} \bar{u} - b \| \leq \| A_\delta \bar{u} - b \| + \| \frac{A}{\alpha} - A_\delta \| \| \bar{u} \| \leq \frac{\xi}{2} + \frac{\xi}{8\alpha\kappa_A} \Ocal(\alpha \kappa_A) \leq \xi.$$
\end{proof}

\section{Iterative Refinement for the LSP using quantum subroutines}
\label{sec:IR}

Iterative refinement is a classical computing technique used to obtain
accurate solutions to systems of linear equations. This section summarizes
the core idea at a high level before providing an analysis of our
algorithm.

\subsection{The algorithm}

Our IR method for the LSP is presented in detail in
Algorithm \ref{LS:alg:iterative refinement for LS}. In every iteration,
there is a call to the quantum linear systems oracle $O_{LS}$, which we
describe in detail in Section \ref{LS:s:oracle}. There are two precision
parameters: $\xi$, the fixed precision we use for each of our oracle calls,
and $\epsilon$, the desired precision of the final solution. We assume the problem data $(A, b)$ satisfies the hypotheses of Definition~\ref{LS:def:oracle} (after normalizing $b$), and that $A$ is
invertible; the singular case is handled by replacing $A^{-1}$ with the
pseudoinverse $A^{+}$ throughout.

The target is to find a classical solution $x$ such that
\[
    \lVert Ax-b\rVert \leq \epsilon.
\]

We classically scale the linear system by setting
\[
    \tilde{A}\gets \frac{A}{\alpha},
    \qquad
    \tilde{b}\gets \frac{b}{\lVert b\rVert},
\]
and use our IR algorithm to find a solution $\tilde{x}$ of the scaled
system. Then, the solution of the original problem can be recovered as
\[
    x=\frac{\lVert b\rVert}{\alpha}\tilde{x}.
\]
Here, $\alpha$ is the subnormalization factor of the block-encoding of
$A$. Note that in all of the input models considered here, $\alpha$ is explicitly known or can be estimated with classical operations whose cost is smaller than the cost of $O_{LS}$, hence we can ignore this aspect in the analysis. The classical scaling $x=\lVert b\rVert \tilde{x}/\alpha$ leads to better complexity for the quantum
oracle. To achieve
$\lVert Ax-b\rVert\leq\epsilon$, we need
\[
    \lVert\tilde{A}\tilde{x}-\tilde{b}\rVert
    \leq \frac{\epsilon}{\lVert b\rVert};
\]
if we aim to achieve
$\lVert x-A^{-1}b\rVert\leq\epsilon'$, then we need  
\[
    \lVert\tilde{x}-\tilde{A}^{-1}\tilde{b}\rVert
    \leq \frac{\epsilon'\alpha}{\lVert b\rVert}.
\]
As $ \lVert\tilde{x}-\tilde{A}^{-1}\tilde{b}\rVert \leq \alpha \kappa_A \lVert\tilde{A}\tilde{x}-\tilde{b}\rVert$, then IR requires

$$\lVert\tilde{A}\tilde{x}-\tilde{b}\rVert \leq \frac{\epsilon'}{\kappa_A\lVert b\rVert}$$
This allows us to set the target precision of the IR scheme.

The algorithm starts from an initial point
$\tilde{x}^{(0)}\in\mathbb{R}^{d}$ that can be chosen arbitrarily, but it is reasonable to set to it be the all-zero vector. In each iteration, we first compute
the residual and its norm classically as
\[
    r^{(k)}
    =
    \tilde{b}-\tilde{A}\tilde{x}^{(k)}.
\]
We then prepare the normalized residual state
\[
    |r^{(k)}\rangle
    =
    \frac{r^{(k)}}{\lVert r^{(k)}\rVert}
\]
and call $O_{LS}$ to approximately solve the normalized refining system
\begin{equation}
    \tilde{A}u^{(k)}
    \approx
    \frac{r^{(k)}}{\lVert r^{(k)}\rVert}.
    \label{LS:e:corrSystem}
    \tag{Ref-Sys}
\end{equation}
An approximate solution $\tilde{u}^{(k)}$ of \eqref{LS:e:corrSystem} therefore provides a
correction direction for the current residual. We update the current solution according to
\[
    \tilde{x}^{(k+1)}
    \gets
   \tilde{x}^{(k)}
    +
    \lVert r^{(k)}\rVert \tilde{u}^{(k)}.
\]

This section shows that we can use a fixed level of precision (e.g., $\xi=10^{-1}$) for every call
to $O_{LS}$ in Algorithm \ref{LS:alg:iterative refinement for LS}.
In particular, while the desired final accuracy $\epsilon$ may be
arbitrarily small, the quantum linear systems oracle is only required
to solve each refining system to the fixed precision $\xi$. Once
    $\lVert r^{(k)}\rVert
    \leq
    \frac{\epsilon}{\lVert b\rVert},$
the algorithm terminates and reports the corresponding solution.

\begin{algorithm}
\begin{algorithmic}
\STATE \textbf{Input}: Coefficient matrix
$A\in\mathbb{R}^{d\times d}$, right-hand side vector
$b\in\mathbb{R}^{d}$, error tolerances $0<\epsilon\ll\xi<1$.

\STATE \textbf{Output}: A classical vector $x\in\mathbb{R}^{d}$
satisfying $\lVert Ax-b\rVert\leq\epsilon$.\\

\STATE Normalize the system
$\tilde{b}\gets b/\lVert b\rVert$ and
$\tilde{A}\gets A/\alpha$.\\

\STATE Choose $\tilde{x}^{(0)}\gets 0$ and $k\gets 0$.\\

\STATE Compute
$r^{(0)}\gets\tilde{b}-\tilde{A}\tilde{x}^{(0)}$
and 
$\lVert r^{(0)}\rVert$.\\

\textbf{while} 
$\lVert r^{(k)}\rVert>\frac{\epsilon}{\lVert b\rVert}$
\begin{enumerate}

\item Classically compute the residual
\[
    r^{(k)}
    \gets
    \tilde{b}-\tilde{A}\tilde{x}^{(k)}
\]
and its norm $\lVert r^{(k)}\rVert.$

\item Prepare the normalized residual state
\[
    |r^{(k)}\rangle
    =
    \frac{r^{(k)}}{\lVert r^{(k)}\rVert}.
\]

\item Use $O_{LS}(\tilde{A}, r^{(k)}/\lVert r^{(k)}\rVert, \xi)$ to obtain a classical vector
$\tilde{u}^{(k)}$ satisfying
\[
    \left\lVert
    \tilde{A}\tilde{u}^{(k)}
    -
    \frac{r^{(k)}}{\lVert r^{(k)}\rVert}
    \right\rVert
    \leq \xi.
\]
\label{LS:step:solve}

\item Update
\[
    \tilde{x}^{(k+1)}
    \gets
    \tilde{x}^{(k)}
    +
    \lVert r^{(k)}\rVert
    \tilde{u}^{(k)}.
\]

\item $k\gets k+1$.

\end{enumerate}

\textbf{end}\\

\STATE \textbf{Return}
$x=\frac{\lVert b\rVert}{\alpha}\tilde{x}^{(k)}$.

\caption{Iterative Refinement for the Linear System Problem}
\label{LS:alg:iterative refinement for LS}
\end{algorithmic}
\end{algorithm}

As shown in Algorithm \ref{LS:alg:iterative refinement for LS}, the
residual vector and its norm are computed classically at every iteration.
The normalized residual is then encoded into the quantum state
$|r^{(k)}\rangle$ and provided as the right-hand side to the quantum
linear systems oracle. The oracle solves the refining system only to the
fixed precision $\xi$, after which tomography is used to obtain the
classical correction vector $\tilde{u}^{(k)}$. The correction is then
rescaled by the residual norm and added classically to the current
iterate. Thus, high precision is achieved through the classical
iterative-refinement procedure, while each quantum linear-system solve
is performed only at fixed precision.

\subsection{Convergence}

In the following result, we establish that the sequence of solutions generated by Algorithm \ref{LS:alg:iterative refinement for LS} satisfies the normalized linear system $A x = \tilde{b}$ with increasing accuracy.
\begin{theorem}\label{LS:theo; II-LS}
   Let $\tilde{x}^{(k)}$ and $r^{(k)}$, $k = 0, 1, \dots$ be the sequence of solutions and residual produced by Algorithm \ref{LS:alg:iterative refinement for LS}. Then for all $k \geq 0$,
$$\left\| \tilde{b} - \tilde{A} \tilde{x}^{(k)} \right\| \leq \xi^{k}.$$

\end{theorem}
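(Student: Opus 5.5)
The plan is induction on $k$, using the residual recursion together with the guarantee of $O_{LS}$ from Definition~\ref{LS:def:oracle}.

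\emph{Base case.} For $k=0$ we have $\tilde{x}^{(0)}=0$. Hence $r^{(0)}=\tilde{b}$ and $\|r^{(0)}\|=\|\tilde{b}\|=1=\xi^0$.

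\emph{Inductive step.} Assume $\|r^{(k)}\|\le \xi^k$. If $r^{(k)}=0$, or the stopping test has already fired, the iterate is not changed and there is nothing left to prove. Otherwise the oracle is called on $r^{(k)}/\|r^{(k)}\|$. This vector has unit norm, and it lies in the column space of $\tilde{A}$ because $\tilde{A}$ is invertible (or, in the singular case, because $\tilde{b}$ and every $\tilde{A}\tilde{x}^{(k)}$ lie in that column space). So the hypotheses of Definition~\ref{LS:def:oracle} hold, and the returned $\tilde{u}^{(k)}$ satisfies $\|\tilde{A}\tilde{u}^{(k)}-r^{(k)}/\|r^{(k)}\|\|\le\xi$.

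Next, write out the new residual using the update rule:
\begin{align*}
r^{(k+1)} &= \tilde{b}-\tilde{A}\tilde{x}^{(k)}-\|r^{(k)}\|\tilde{A}\tilde{u}^{(k)}\\
&= \|r^{(k)}\|\left(\frac{r^{(k)}}{\|r^{(k)}\|}-\tilde{A}\tilde{u}^{(k)}\right).
\end{align*}
Taking norms and applying the oracle bound gives $\|r^{(k+1)}\|\le \xi\,\|r^{(k)}\|\le \xi^{k+1}$, which closes the induction.

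\emph{Expected obstacle.} There is little genuine difficulty here. The only points needing care are checking the oracle's input conditions (unit norm and membership in the column space) and making sure the oracle is invoked with the correctly scaled $\tilde{A}$, whose singular values lie in $[1/(\alpha\kappa_A),1/\alpha]$. The bound is deterministic given that each oracle call succeeds. Failure probabilities are suppressed throughout the paper, and would be handled by a union bound over the $\Ocal(\log(\|b\|\kappa_A/\epsilon))$ iterations.
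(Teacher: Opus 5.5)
Your proof is correct and follows the paper's argument exactly: induction from $\|r^{(0)}\|=\|\tilde{b}\|=1$, using the identity $r^{(k+1)}=\|r^{(k)}\|\bigl(r^{(k)}/\|r^{(k)}\|-\tilde{A}\tilde{u}^{(k)}\bigr)$ together with the oracle's $\xi$-guarantee. Your explicit check of the oracle's input conditions is a sensible addition that the paper leaves implicit. One wording fix: once the stopping test fires, say that no further iterates are produced, so the claim is vacuous for larger $k$. Do not say the iterate is ``unchanged'', because a frozen nonzero residual with $\|r^{(k)}\|\le\xi^k$ need not satisfy the bound $\xi^{k+1}$.
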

\begin{proof}
    By induction. First, observe that $\|r^{(0)}\|=\|\tilde{b}\|=1$ trivially holds at $k = 0$. Now assume that $\|r^{(k)}\|\leq  \xi^{k}$ is true for any $k\geq 0$ under the induction hypothesis, and check the residual at iteration $k+1$. 
    
    The classical estimate $\tilde{u}^{(k)}$ satisfies
    $$\left\|\frac{ r^{(k)}}{\|r^{(k)}\|} - \tilde{A} \tilde{u}^{(k)}\right\|\leq  \xi.$$
    Therefore, it follows that the updated solution in each iteration satisfies
   \begin{align*}
        \left\|r^{(k+1)}\right\| =  \left\|\tilde{b}-\tilde{A}\tilde{x}^{(k+1)} \right\|
        &= \left\|\tilde{b}-\tilde{A}\left(\tilde{x}^{(k)}+
        \lVert r^{(k)}\rVert\tilde{u}^{(k)}\right)\right\|
        =\left\|\tilde{b} - \tilde{A} \tilde{x}^{(k)} - \lVert r^{(k)}\rVert
        \tilde{A} \tilde{u}^{(k)} \right\|  \\
        &=\|r^{(k)}\| \left\|\frac{ r^{(k)}}{\|r^{(k)}\|} - \tilde{A} \tilde{u}^{(k)} \right\| 
        \leq \|r^{(k)}\| \xi\leq \xi^{k+1}.
   \end{align*}
\end{proof}

\begin{corollary}\label{LS:corr: iteration complexity}
Let $\xi > 0$ be fixed. Let $0 < \epsilon \ll \xi < 1$. Then, Algorithm~\ref{LS:alg:iterative refinement for LS} terminates in at
most
$$\left\lceil \frac{\log\left( \| b \| / \epsilon \right)}{\log\left( 1/\xi \right)} \right\rceil
= \Ocal\left( \log\left( \frac{\| b \|}{\epsilon} \right) \right)$$
iterations.
\end{corollary}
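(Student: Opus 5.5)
The corollary follows directly from Theorem~\ref{LS:theo; II-LS}, since it bounds $\|r^{(k)}\|$ at every iteration, together with the termination test of Algorithm~\ref{LS:alg:iterative refinement for LS}. The algorithm stops at the first $k$ with $\|r^{(k)}\| \le \epsilon/\|b\|$. It is therefore enough to exhibit an index $K$ with $\xi^{K} \le \epsilon/\|b\|$; then the loop has exited at or before iteration $K$.

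First I solve $\xi^{K} \le \epsilon/\|b\|$ for $K$. Taking logarithms and using $\log\xi < 0$ (because $0<\xi<1$), this is equivalent to
\[
K \log(1/\xi) \ge \log\left(\|b\|/\epsilon\right).
\]
So $K = \left\lceil \log(\|b\|/\epsilon)/\log(1/\xi) \right\rceil$ works. Here $\log(1/\xi) > 0$, so the division is legitimate.

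For this $K$ to be a valid iteration count, I need the numerator to be nonnegative. This follows from the standing assumption $\epsilon \ll \xi < 1$ relative to the normalized system: if instead $\epsilon \ge \|b\|$, then $\|r^{(0)}\| = 1 \le \epsilon/\|b\|$ and the loop performs zero iterations, so the bound holds trivially. With $K$ in hand, Theorem~\ref{LS:theo; II-LS} gives $\|r^{(K)}\| \le \xi^{K} \le \epsilon/\|b\|$, so the while-condition fails at $k=K$ and the algorithm terminates after at most $K$ iterations.

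It remains to convert $K$ to the $\Ocal$ form. Since $\xi$ is a fixed constant in $(0,1)$, the factor $1/\log(1/\xi)$ is a constant. Hence $K \le \log(\|b\|/\epsilon)/\log(1/\xi) + 1 = \Ocal(\log(\|b\|/\epsilon))$, where the additive $1$ is absorbed because $\|b\|/\epsilon$ is large under $\epsilon \ll \xi$.

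None of these steps poses a real obstacle. The only point that needs care is bookkeeping: the termination test compares $\|r^{(k)}\|$ with $\epsilon/\|b\|$, and in the algorithm's indexing the residual $r^{(k)}$ is the one checked before the $(k+1)$-st oracle call.
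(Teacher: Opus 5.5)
Your proof is correct and is essentially the paper's argument: combine $\|r^{(k)}\|\le\xi^{k}$ from Theorem~\ref{LS:theo; II-LS} with the stopping test $\|r^{(k)}\|\le\epsilon/\|b\|$, and solve $\xi^{k}\le\epsilon/\|b\|$ for $k$. One small looseness is in your extra edge case: if $\epsilon\ge\|b\|/\xi$ the ceiling is negative, so there the bound must be read as $\max\{0,\cdot\}$ and does not hold ``trivially'' as written; the paper sidesteps this regime entirely via $\epsilon\ll\xi$.
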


\begin{proof}
By Theorem~\ref{LS:theo; II-LS}, $\| r^{(k)} \| \leq \xi^{k}$, and
Algorithm~\ref{LS:alg:iterative refinement for LS} terminates once $\| r^{(k)} \| \leq \epsilon / \| b \|$.
This is guaranteed as soon as $\xi^{k} \leq \epsilon / \| b \|$, i.e., for
$k \geq \log\left( \| b \| / \epsilon \right) / \log\left( 1/\xi \right)$.
\end{proof}
The iteration bound provided in Corollary \ref{LS:corr: iteration complexity} is independent of the chosen input model.  When the forward-error guarantee $\left\| x - A^{-1} b \right\| \leq \epsilon$
is sought, Algorithm~\ref{LS:alg:iterative refinement for LS}
is invoked with tolerance $\epsilon / \kappa_A$ in place of $\epsilon$: since
$\left\| x - A^{-1} b \right\| \leq \left\| A^{-1} \right\| \left\| A x - b \right\|
\leq \kappa_A \left\| A x - b \right\|$, a residual of $\epsilon / \kappa_A$
suffices, and the iteration bound of Corollary~\ref{LS:corr: iteration complexity} becomes
$\Ocal\left( \log\left( \tfrac{\| b \| \kappa_A}{\epsilon} \right) \right)$.

\subsection{Computational complexity}\label{sec:complexity}

We now analyze the computational complexity of
Algorithm \ref{LS:alg:iterative refinement for LS}. At each iteration,
the residual
\[
    r^{(k)}
    =
    \tilde{b}-\tilde{A}\tilde{x}^{(k)}
\]
and its norm
are computed classically. The normalized residual
\[
    \frac{r^{(k)}}{\lVert r^{(k)}\rVert}
\]
is used as the right-hand side of
the quantum linear-system oracle $O_{LS}$, which requires us to prepare a quantum state with amplitudes given by $r^{(k)}/\lVert r^{(k)}\rVert$. Note that the computation of the residual does not contribute to
the query complexity of the block-encoding oracle.

More precisely, one iteration of Algorithm
\ref{LS:alg:iterative refinement for LS} consists of the following
operations. First, the matrix-vector product
$\tilde{A}\tilde{x}^{(k)}$ is computed classically and used to form
$r^{(k)}$. This takes $\mathcal{O}(\operatorname{nnz}(A)) = \mathcal{O}(ds)$ arithmetic operations. The
normalization factor $\lVert r^{(k)}\rVert$ is also computed
classically.

The normalized residual vector is subsequently encoded into the quantum state
\[
    |r^{(k)}\rangle
    =
    \frac{1}{\lVert r^{(k)}\rVert}
    \sum_{j=1}^{d}r_j^{(k)}|j\rangle.
\]
We denote the cost of preparing this state from its classical
description by $T_{\mathrm{SP}}$. In the QRAM input model, once the
classical data structure associated with $r^{(k)}$ has been constructed,
the corresponding amplitude-encoded state can be prepared with
polylogarithmic QRAM query complexity in the dimension
\cite{kerenidis2017quantum}.

After preparing $|r^{(k)}\rangle$, the algorithm calls oracle $O_{LS}(\tilde{A},\allowbreak r^{(k)}/\lVert r^{(k)}\rVert, \xi)$.
The resulting quantum state is converted into a classical correction vector using tomography. Since $\xi$ is fixed throughout the iterative-refinement procedure, the quantum cost of each refining solve is independent of the final target precision $\epsilon$. The complexity of solving one refining system can be derived directly from Theorem~\ref{LS:theo:oracle-complexity}, taking $\xi$ fixed.

\subsection{A difference between classical and quantum IR}
In classical IR sche\-mes, one has the choice of the norm used to compute the error incurred by the linear system solution, i.e., the norm used to assess the size of the residual. The choice of the $\ell^\infty$-norm is common \cite{demmel2006error}, and the convergence proof carries through without any impediment. In the quantum scheme described in Algorithm~\ref{LS:alg:iterative refinement for LS}, however, the right-hand side vector $r^{(k)}$ is always normalized in the $\ell^2$-norm because it is encoded as a quantum state. As a consequence, we cannot measure errors in a different norm. We show where the proof of Theorem~\ref{LS:theo; II-LS} fails if we wish to employ the $\ell^\infty$-norm. The classical estimate $\tilde{u}^{(k)}$ satisfies $\left\|\frac{ r^{(k)}}{\|r^{(k)}\|_2} - \tilde{A} \tilde{u}^{(k)}\right\|_\infty \leq  \xi.$ The induction step reads
\begin{align*}
        \left\|r^{(k+1)}\right\|_\infty =  \left\|\tilde{b}-\tilde{A}\tilde{x}^{(k+1)} \right\|_\infty
        &= \left\|\tilde{b}-\tilde{A}\left(\tilde{x}^{(k)}+
        \lVert r^{(k)}\rVert_2\tilde{u}^{(k)}\right)\right\|_\infty \\
        &=\left\|\tilde{b} - \tilde{A} \tilde{x}^{(k)} - \lVert r^{(k)}\rVert_2
        \tilde{A} \tilde{u}^{(k)} \right\|_\infty  \\
        &=\|r^{(k)}\|_2 \left\|\frac{ r^{(k)}}{\|r^{(k)}\|_2} - \tilde{A} \tilde{u}^{(k)} \right\|_\infty
        \leq \|r^{(k)}\|_2 \xi\leq \sqrt{d} \xi^{k+1},
\end{align*}
whereas our goal was to obtain the tighter bound $\left\|r^{(k+1)}\right\|_\infty \le \xi^{k+1}$. This observation is particularly important in light of the fact that if we were only looking for an $\ell^\infty$-norm error bound for the solution of the linear system, the tomography step would become computationally much cheaper: \cite{van2023quantum} shows that $\widetilde{\Ocal}(1/\xi^2)$ applications of the circuit to prepare the quantum state are sufficient to obtain a classical estimate with $\ell^\infty$-norm error at most $\xi$. Note that this is independent of the dimension $d$. Unfortunately, the discussion above shows that an $\ell^\infty$-norm error bound at each iteration does not suffice for our purposes, due to the $\ell^2$-norm normalization of the residual.

\section{Complexity analysis in the three input models}
\label{sec:complexity_analysis_input_models}
We can now characterize the total complexity of algorithm \ref{LS:alg:iterative refinement for LS} under the QRAM, sparse access, and LCU input models, and compare it to results in the open literature.

\subsection{Detailed complexity statements}
In this section, as usual we assume that the singular values of $A$ lie in $[1/\kappa_A, 1]$, and we take into account the normalizations applied by Algorithm~\ref{LS:alg:iterative refinement for LS}. 

\begin{theorem}\label{LS:theo: complexity QRAM}
Let $A \in \R{d \times d}$ be $s$-sparse, and let $b \in \R{d}$. Let $\epsilon > 0$. In the QRAM input model, Algorithm \ref{LS:alg:iterative refinement for LS} classically outputs a vector $x \in \R{d}$ satisfying $\left\| x - A^{-1} b \right\| \leq \epsilon$ using at most
$$ \tilde{\Ocal}_{d, \kappa_A, \frac{ \| b \|}{\epsilon}} \left(  \|A\|_F d \kappa^2_A  \right)$$
accesses to QRAM and additional quantum gates, and 
$$ \tilde{\Ocal}_{\kappa_A, \frac{ \| b \|}{\epsilon}} \left(  d s  \right)$$
classical arithmetic operations. 
\end{theorem}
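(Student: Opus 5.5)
The plan is to combine the iteration bound of Corollary~\ref{LS:corr: iteration complexity}, in its forward-error form, with the per-iteration cost of $O_{LS}$ from Corollary~\ref{cor:ols-qram}. The precision $\xi$ is fixed to a constant, say $\xi = 1/2$. We then add the classical cost of forming residuals and maintaining the QRAM data structures.

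First, the iteration count. To guarantee $\|x - A^{-1}b\| \le \epsilon$, we run Algorithm~\ref{LS:alg:iterative refinement for LS} with the residual target on the scaled system chosen as in the discussion preceding the algorithm. The required scaled residual is of order $\epsilon/(\kappa_A\|b\|)$, where the factor $\alpha$ between scaled and unscaled forward errors cancels against $\|\tilde A^{-1}\| = \alpha\kappa_A$. By Theorem~\ref{LS:theo; II-LS} the scaled residual after $k$ iterations is at most $\xi^k$. Hence $K = \Ocal(\log(\|b\|\kappa_A/\epsilon))$ iterations suffice, and this factor is absorbed into the $\tilde{\Ocal}_{\kappa_A,\|b\|/\epsilon}$ notation.

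Second, the quantum cost per iteration. In the QRAM model, Lemma~\ref{lem:beQRAM} gives a block-encoding with subnormalization $\alpha = \|A\|_F$, built with polylogarithmically many QRAM accesses. The hypotheses of Definition~\ref{LS:def:oracle} are met by $\tilde A = A/\alpha$, whose singular values lie in $[1/(\alpha\kappa_A), 1/\alpha]$, with unit right-hand side $r^{(k)}/\|r^{(k)}\|$. The state $\ket{r^{(k)}}$ is prepared with polylogarithmically many QRAM accesses once $r^{(k)}$ is written into the data structure of \cite{kerenidis2017quantum}. Theorem~\ref{LS:theo:oracle-complexity}, or directly Corollary~\ref{cor:ols-qram}, then gives $\tilde{\Ocal}_{d,\kappa_A}(\|A\|_F d\kappa_A^2/\xi)$ QRAM accesses and gates per call. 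Since $\xi$ is constant, this is $\tilde{\Ocal}_{d,\kappa_A}(\|A\|_F d\kappa_A^2)$, and tomography with QRAM (Theorem~\ref{thm:euclidean_norm_tomo}) contributes only near-linear gate counts. Multiplying by $K$ gives the stated quantum bound.

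Third, the classical cost. Each iteration computes $\tilde A\tilde x^{(k)}$ in $\Ocal(\mathrm{nnz}(A)) = \Ocal(ds)$ operations, plus $\Ocal(d)$ for the residual norm, the update $\tilde x^{(k+1)}$, and rescaling. Writing $r^{(k)}$ into QRAM costs $\tilde{\Ocal}(d)$ classical operations. $A$ is stored once at $\tilde{\Ocal}(ds)$ cost, and the tomography post-processing is $\tilde{\Ocal}(d)$. Over $K$ iterations this totals $\tilde{\Ocal}_{\kappa_A,\|b\|/\epsilon}(ds)$.

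The main obstacle is the bookkeeping around normalization. We must check that the log factors hidden in Theorem~\ref{LS:theo:oracle-complexity} do not acquire a dependence on $\epsilon$ through the iterates. Because every oracle call sees a unit-norm right-hand side and a fixed $\xi$, its cost is independent of $k$, so $\epsilon$ enters only through $K$. We should also confirm that precision loss in computing the classical residuals is not an issue in exact arithmetic, and that the forward-error conversion via $\kappa_A$ is applied consistently with the $\alpha$-scaling.
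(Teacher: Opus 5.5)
Your proposal is correct and is essentially the paper's argument. The paper proves the theorem simply by combining the fixed-$\xi$ per-call cost of Corollary~\ref{cor:ols-qram} with the geometric decay $\|r^{(k)}\|\le\xi^k$ of Theorem~\ref{LS:theo; II-LS}, run to the scaled residual target $\epsilon/(\kappa_A\|b\|)$ so that $\epsilon$ enters only through the $\Ocal(\log(\|b\|\kappa_A/\epsilon))$ iteration count; your classical bookkeeping is more explicit than the paper's, and one minor caveat is that the $\tilde{\Ocal}(d)$ and $\tilde{\Ocal}(ds)$ costs you assign to building the QRAM data structures and to tomography post-processing may carry $\log d$ factors that the subscript of the stated bound $\tilde{\Ocal}_{\kappa_A,\|b\|/\epsilon}(ds)$ does not formally allow, which is an imprecision the paper shares (it does not count these costs at all) rather than a gap in your argument.
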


Proof of this theorem follows by combining  Corollary~\ref{cor:ols-qram} and Theorem~\ref{LS:theo; II-LS}. Note that because $\|A\| \le 1$, then $\|A\|_F \le \sqrt{d}$.

\begin{theorem}\label{LS:theo: complexity sparse}
Let $A \in \R{d \times d}$ be $s$-sparse, and let $b \in \R{d}$. Let $\epsilon > 0$. In the sparse access input model, Algorithm \ref{LS:alg:iterative refinement for LS} classically outputs a vector $x \in \R{d}$ satisfying $\left\| x - A^{-1} b \right\| \leq \epsilon$ using at most
$$ \tilde{\Ocal}_{d, \kappa_A, \frac{ \| b \|}{\epsilon}} \left( \sqrt{s} d \kappa^2_A  (s\kappa_A)^{o(1)}\right)$$
accesses to sparse access oracles,
$$ \tilde{\Ocal}_{d, \kappa_A, \frac{ \| b \|}{\epsilon}} \left(\sqrt{s} d^2 \kappa_A^2 \right)$$
additional quantum gates, and 
$$ \tilde{\Ocal}_{\kappa_A, \frac{ \| b \|}{\epsilon}} \left(  d s  \right)$$
classical arithmetic operations. 
\end{theorem}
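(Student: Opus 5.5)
The proof mirrors the QRAM case, replacing Corollary~\ref{cor:ols-qram} by Corollary~\ref{cor:ols-sparse}. The plan is to bound the number of iterations, bound the per-iteration quantum and classical cost at the fixed precision $\xi$, and multiply.

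\textbf{Iteration count.} We run Algorithm~\ref{LS:alg:iterative refinement for LS} with residual tolerance $\epsilon/\kappa_A$ in place of $\epsilon$. The forward-error guarantee then follows from
\[
\|x-A^{-1}b\| \le \|A^{-1}\|\,\|Ax-b\| \le \kappa_A \|Ax-b\|.
\]
By Theorem~\ref{LS:theo; II-LS} and Corollary~\ref{LS:corr: iteration complexity}, the algorithm stops after $\Ocal(\log(\|b\|\kappa_A/\epsilon)/\log(1/\xi))$ iterations. With $\xi$ a fixed constant such as $1/10$, this is polylogarithmic in $\kappa_A$ and $\|b\|/\epsilon$, so it is absorbed into the $\tilde{\Ocal}$.

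\textbf{Per-iteration quantum cost.} In the sparse model the block-encoding of Lemma~\ref{lem:sparseBE} has $\alpha=\Theta(1)$, so $\tilde A=A/\alpha$ only rescales the singular interval by a constant. The refining system $\tilde A u = r^{(k)}/\|r^{(k)}\|$ has the same matrix in every iteration; only the right-hand side changes. Hence Corollary~\ref{cor:ols-sparse} applies at precision $\xi$, giving
\[
\tilde{\Ocal}\!\left(\sqrt{s}\,d\,\kappa_A^2\,(s\kappa_A/\xi)^{o(1)}/\xi\right)
\]
oracle queries and $\tilde{\Ocal}(\sqrt{s}\,d^2\kappa_A^2/\xi)$ gates. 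For constant $\xi$ these become $\sqrt{s}\,d\,\kappa_A^2(s\kappa_A)^{o(1)}$ and $\sqrt{s}\,d^2\kappa_A^2$. The block-encoding error parameter $\epsilon_{\rm BE}$ in Lemma~\ref{lem:sparseBE} must be set to $\Theta(\xi/\kappa_A)$, using the same perturbation argument as in the proof of Corollary~\ref{cor:ols-lcu}; this costs only the $o(1)$ and polylog factors already displayed.

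\textbf{Main obstacle: preparing $\ket{r^{(k)}}$ without QRAM.} The statement counts oracle queries separately from gates. Preparing an arbitrary $d$-dimensional state from its classical description with a generic circuit costs $\Ocal(d)$ gates per copy. The oracle $O_{LS}$ uses $\tilde{\Ocal}(d\kappa_A^2)$ copies of $U_b$ (Theorem~\ref{LS:theo:oracle-complexity} with $\xi$ constant), so this adds $\tilde{\Ocal}(d^2\kappa_A^2)$ gates per iteration. That is dominated by the stated $\sqrt{s}\,d^2\kappa_A^2$ gate bound, which already contains the tomography overhead of Theorem~\ref{thm:euclidean_norm_tomo}. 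I would state this accounting explicitly.

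\textbf{Classical cost and total.} Each iteration forms $r^{(k)}$ by a sparse matrix-vector product in $\Ocal(ds)$ operations, and the update and norm cost $\Ocal(d)$. Multiplying the per-iteration quantum and classical costs by the iteration count gives the three stated bounds.
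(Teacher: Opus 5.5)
Your proposal is correct and follows the paper's route. The paper's proof simply combines Corollary~\ref{cor:ols-sparse} at fixed $\xi$ with the geometric residual decay of Theorem~\ref{LS:theo; II-LS}, runs the algorithm at tolerance $\epsilon/\kappa_A$ to obtain the forward-error bound, and multiplies the per-iteration costs by the $\Ocal(\log(\|b\|\kappa_A/\epsilon))$ iteration count. Your two additions fill in details the paper leaves implicit: preparing $\ket{r^{(k)}}$ without QRAM costs $\Ocal(d)$ gates per call to $U_b$, i.e.\ $\widetilde{\Ocal}(d^2\kappa_A^2)$ per iteration, which is dominated by the $\sqrt{s}\,d^2\kappa_A^2$ term, and the block-encoding error must be set to $\Theta(\xi/\kappa_A)$.
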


Proof of this theorem follows by combining  Corollary~\ref{cor:ols-sparse} and Theorem~\ref{LS:theo; II-LS}.

\begin{theorem}\label{LS:theo: complexity LCU}
Let $A \in \R{d \times d}$, and let $b \in \R{d}$. Let $\epsilon > 0$. In the LCU input model, Algorithm \ref{LS:alg:iterative refinement for LS} classically outputs a vector $x \in \R{d}$ satisfying $\left\| x - A^{-1} b \right\| \leq \epsilon$ using at most
$$ \tilde{\Ocal}_{d, \kappa_A, \frac{ \| b \|}{\epsilon}} \left(  \alpha d \kappa^2_A  \right)$$
accesses to LCU oracles,
$$ \tilde{\Ocal}_{d, \kappa_A, \frac{ \| b \|}{\epsilon}} \left(\alpha d^2 \kappa_A^2 \right)$$
additional quantum gates, and 
$$ \tilde{\Ocal}_{\kappa_A, \frac{ \| b \|}{\epsilon}} \left(K C_U\right)$$
classical arithmetic operations, where $C_U$ is an upper bound on the number of arithmetic operations necessary to compute each matrix-vector product with the matrices $U_i$ of the LCU decomposition.
\end{theorem}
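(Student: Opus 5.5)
The proof mirrors the QRAM and sparse-access cases: combine the per-call cost of Corollary~\ref{cor:ols-lcu} with the iteration bound from Theorem~\ref{LS:theo; II-LS} and Corollary~\ref{LS:corr: iteration complexity}. Then add the classical cost of forming residuals from the LCU description.

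First, fix the refinement precision $\xi$ to an absolute constant, say $\xi = 1/2$. For the forward-error guarantee, run Algorithm~\ref{LS:alg:iterative refinement for LS} with residual tolerance $\epsilon/\kappa_A$, as discussed after Corollary~\ref{LS:corr: iteration complexity}. Since $\alpha \ge \|A\| = 1$ after normalization, the bound $\|\tilde{x}-\tilde{A}^{-1}\tilde{b}\| \le \alpha\kappa_A\|\tilde A\tilde x-\tilde b\|$ and the rescaling $x = \|b\|\tilde{x}/\alpha$ require the scaled residual to drop below $\epsilon/(\kappa_A\|b\|)$. By Theorem~\ref{LS:theo; II-LS}, this takes
\[
k^\ast = \Ocal\!\left(\log\left(\tfrac{\|b\|\kappa_A}{\epsilon}\right)\right)
\]
iterations. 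Each iteration calls $O_{LS}(\tilde{A}, r^{(k)}/\|r^{(k)}\|, \xi)$ on a matrix with singular values in $[1/(\alpha\kappa_A),1/\alpha]$, with a unit right-hand side that lies in the range of the invertible $\tilde A$. These are exactly the hypotheses of Corollary~\ref{cor:ols-lcu}.

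By Corollary~\ref{cor:ols-lcu} with constant $\xi$, each call costs $\widetilde{\Ocal}_{d,\kappa_A}(\alpha d\kappa_A^2)$ LCU oracle accesses and $\widetilde{\Ocal}_{d,\kappa_A}(\alpha d^2\kappa_A^2)$ additional gates. The $\Ocal(K)$ gates of Lemma~\ref{lem:lcuBE} per block-encoding use must also be absorbed; I would bound $K$ polylogarithmically or fold it into the $\alpha$-dependent hidden factors, which explains the subscript $\alpha$ in the stated gate bound. Preparing $|r^{(k)}\rangle$ needs a state-preparation circuit for an arbitrary classical vector, costing $\Ocal(d)$ gates, and it is invoked a number of times comparable to the $U_A$ queries. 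This is dominated by the $d^2$ tomography gate term. Multiplying by $k^\ast$ turns the $\log(\|b\|\kappa_A/\epsilon)$ factor into part of the polylogarithmic suppression, giving the first two bounds.

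For the classical cost, each iteration computes $r^{(k)} = \tilde b - \tilde A\tilde x^{(k)} = \tilde b - \alpha^{-1}\sum_i \alpha_i U_i \tilde x^{(k)}$. This takes $K$ products $U_i\tilde x^{(k)}$ at $C_U$ operations each, plus $\Ocal(Kd)$ operations for scaling, summing and the norm. Here I would argue that $C_U \ge d$, since even writing down $U_i v$ costs $d$, so the per-iteration cost is $\Ocal(KC_U)$. Multiplying by $k^\ast$ gives $\widetilde{\Ocal}_{\kappa_A,\|b\|/\epsilon}(KC_U)$.

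I expect the main obstacle to be the effect of the block-encoding perturbation in Corollary~\ref{cor:ols-lcu} on the iteration. Each call is exact only for $A_\delta$, but Corollary~\ref{cor:ols-lcu} already guarantees the residual bound $\xi$ against $\tilde A = A/\alpha$ itself. Because the residual is recomputed classically with the true $A$ at every step, the contraction argument of Theorem~\ref{LS:theo; II-LS} goes through unchanged. The choice $\delta = \Theta(\xi/\kappa_A)$ makes $\textup{PREP}$ precision enter only polylogarithmically.
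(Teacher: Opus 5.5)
Your proposal is correct and follows the paper's proof. The paper's proof is just the combination of Corollary~\ref{cor:ols-lcu}, which gives the per-call cost at a fixed $\xi$, with the geometric decay of Theorem~\ref{LS:theo; II-LS} run down to a scaled residual of $\epsilon/(\kappa_A\|b\|)$, together with the $K$ classical products $U_i\tilde{x}^{(k)}$ per iteration. Your aside about folding the $\Ocal(K)$ gates of Lemma~\ref{lem:lcuBE} into $\alpha$ or a polylogarithmic factor is not justified in general, since $K$ is unrelated to $\alpha$ and the theorem's gate bound has no $\alpha$ subscript; it is also unnecessary, because you can take the gate count directly from Corollary~\ref{cor:ols-lcu}, as the paper does.
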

The proof follows by combining  Corollary~\ref{cor:ols-lcu} and Theorem~\ref{LS:theo; II-LS}; note that we assume that the coefficients $\alpha_i$ are known classically and we have some type of classical description of the unitaries $U_i$ as well.

\subsection{Comparison to existing linear system solvers}
Table \ref{tab:LS2} compares the running time of our algorithm to other classical and quantum algorithms in the literature when applied to classically solving linear systems of equations. The proposed IR scheme is exponentially faster than the existing quantum approach, regardless of whether or not we have access to QRAM. 

The classical CG method overall has the fastest asymptotic running time: the quantum IR scheme described in this paper does not achieve quantum advantage. However, it represents a step toward potential quantum advantage, due to its polylogarithmic dependence on the inverse of the precision to which we solve the linear system, compared to previous quantum approaches that incur polynomial cost in these quantities to obtain a classical description of the solution. Further improvements may be able to decrease the dependence on $\kappa$ and show an asymptotic advantage compared to classical CG; in particular, as mentioned in Section~\ref{ss: CT}, \cite{gilyen24iterative} relies on a preliminary version of this paper (\cite{mohammadisiaroudi2023exponentially}) to improve the $\kappa$ dependence.

\begin{table*}[t]
\centering
\caption{Complexity of algorithms for classically solving the LSP to
$\epsilon$-precision. For the proposed IR-based method, the dependence on
the final precision $\epsilon$ is contained only in polylogarithmic factors.
For quantum algorithms, the quantum complexity reports oracle/QRAM queries
and, when relevant, additional gate complexity.  As before, we use the assumption that $\|A\| \le 1$ so $\|A\|_F \le \sqrt{d}$.}
\label{tab:LS2}
\renewcommand{\arraystretch}{1.18}
\resizebox{\textwidth}{!}{
\begin{tabular}{p{5cm}p{3cm}p{4.0cm}p{7.0cm}}
\toprule
\textbf{Algorithm} &
\textbf{Input Model} &
\textbf{Classical Complexity} &
\textbf{Quantum Complexity} \\
\midrule

Factorization methods (dense)
&
Classical
&
$\Ocal(d^3)$
&
-- 
\\

Fast matrix multiplication ($\omega<2.372864$)
\cite{le2014powers}
&
Classical
&
$\Ocal(d^\omega)$
&
--
\\

Fast sparse solver
\cite{peng2021solving}
&
Classical sparse
&
$\widetilde{\Ocal}_{d,\kappa_A,1/\epsilon}
\!\left(d^{2.331645}s\right)$
&
--
\\

Conjugate Gradient ($A\succ0$)
\cite{hestenes1952methods}
&
Classical sparse
&
$\widetilde{\Ocal}_{d,\kappa_A,1/\epsilon}
\!\left(ds\sqrt{\kappa_A}\right)$
&
--
\\

Chebyshev polynomial
\cite{saad2003iterative}
&
Classical sparse
&
$\widetilde{\Ocal}_{d,\kappa_A,1/\epsilon}
\!\left(ds\kappa_A\right)$
&
--
\\
\midrule

QTA+QLSA (\cite{dalzell2024shortcut})
&
QRAM
&
&
$\widetilde{\Ocal}_{d,\kappa_A,1/\epsilon}
\!\left(\|A\|_F d\kappa_A^21/\epsilon\right)$ QRAM accesses\\

QTA+QLSA (\cite{dalzell2024shortcut})
&
Sparse access
&
&
$\widetilde{\Ocal}_{d,\kappa_A,1/\epsilon}
\!\left(\sqrt{s}d\kappa_A^2 1/\epsilon\right)$ oracle queries,
plus
$\widetilde{\Ocal}_{d,\kappa_A,1/\epsilon}
\!\left(\sqrt{s}d^2\kappa_A^2 1/\epsilon\right)$ gates
\\

QTA+QLSA (\cite{dalzell2024shortcut})
&
LCU
&
&
$\widetilde{\Ocal}_{d,\kappa_A,1/\epsilon}
\!\left(\alpha d\kappa_A^2 1/\epsilon\right)$ input-oracle accesses,
plus
$\widetilde{\Ocal}_{d,\kappa_A,1/\epsilon}
\!\left(\alpha d^2 \kappa_A^2 1/\epsilon\right)$ gates
\\

\midrule

IR--QTA+QLSA (this work)
&
QRAM
&
$\widetilde{\Ocal}_{\kappa_A,\|b\|/\epsilon}
\!\left(ds\right)$
&
$\widetilde{\Ocal}_{d,\kappa_A,\|b\|/\epsilon}
\!\left(\|A\|_F d\kappa_A^2\right)$ QRAM accesses\\

IR--QTA+QLSA (this work)
&
Sparse access
&
$\widetilde{\Ocal}_{\kappa_A,\|b\|/\epsilon}
\!\left(ds\right)$
&
$\widetilde{\Ocal}_{d,\kappa_A,\|b\|/\epsilon}
\!\left(\sqrt{s}d\kappa_A^2\right)$ oracle queries,
plus
$\widetilde{\Ocal}_{d,\kappa_A,\|b\|/\epsilon}
\!\left(\sqrt{s}d^2\kappa_A^2\right)$ gates
\\

IR--QTA+QLSA (this work)
&
LCU
&
$\widetilde{\Ocal}_{\kappa_A,\|b\|/\epsilon}
\!\left(KC_U\right)$
&
$\widetilde{\Ocal}_{d,\kappa_A,\|b\|/\epsilon}
\!\left(\alpha d\kappa_A^2\right)$ input-oracle accesses,
plus
$\widetilde{\Ocal}_{d,\kappa_A,\|b\|/\epsilon}
\!\left(\alpha d^2 \kappa_A^2 \right)$ gates
\\

\bottomrule
\end{tabular}
}
\end{table*}
\section{Numerical Experiments}\label{sec:num}


In this section, we empirically validate the iterative refinement scheme
presented in Algorithm~\ref{LS:alg:iterative refinement for LS} by solving randomly generated linear
systems on both quantum computer simulators and quantum hardware.  Our implementation uses
the HHL algorithm \cite{harrow2009quantum} as the underlying QLSA, using textbook quantum phase estimation (QPE) to approximate eigenvalues of~$A$ to $q$ bits of precision and quantum state tomography to extract a classical solution vector at each iteration.  All code is publicly available on \href{https://github.com/QCOL-LU/QLSAs}{GitHub} \footnote{https://github.com/QCOL-LU/QLSAs}.

We generate test instances $Ax = b$ where $A \in \mathbb{R}^{d \times d}$ is a random Hermitian matrix with prescribed condition number~$\kappa_A$ and sparsity~$s$, and $b$ is a random right-hand side vector normalized to unit norm.  The HHL circuit is constructed using Qiskit and transpiled at optimization level~3.  At each IR iteration, we run $500$ successful shots conditioned (post-selected) upon a successful mid-circuit ancilla measurement; this eliminates the need for amplitude amplification, while still serving as a proof-of-concept evaluation of our framework.

\subsection{Convergence in simulation and quantum hardware}
 
To demonstrate that Algorithm~\ref{LS:alg:iterative refinement for LS} works in practice while also being robust to hardware noise, we execute the full iterative refinement loop both in noiseless simulation and on quantum hardware. For simulation, we use the \texttt{AerSimulator} noiseless statevector backend; for hardware experiments, we execute on the IBM \texttt{ibm\_pittsburgh} superconducting quantum computer (a 156-qubit Heron~r3 processor). The test instance is an $8 \times 8$ random Hermitian system ($d = 8$) with condition number $\kappa_A = 5$ and sparsity~$4$. To probe the sensitivity of the scheme to the precision of the quantum oracle, we sweep the size of the QPE register over $q \in \{3,4,5\}$ qubits, using a total of $\log_2(d) + q + 1 \in \{7,8,9\}$ qubits for the full QLSA; after transpilation to the device connectivity these circuits contain $1327$, $1820$, and $2353$ two-qubit gates, respectively.  All hardware runs employ dynamical decoupling together with Pauli twirling of the two-qubit gates for error mitigation.

\begin{figure}[htbp]
    \centering
    \includegraphics[width=0.99\textwidth]{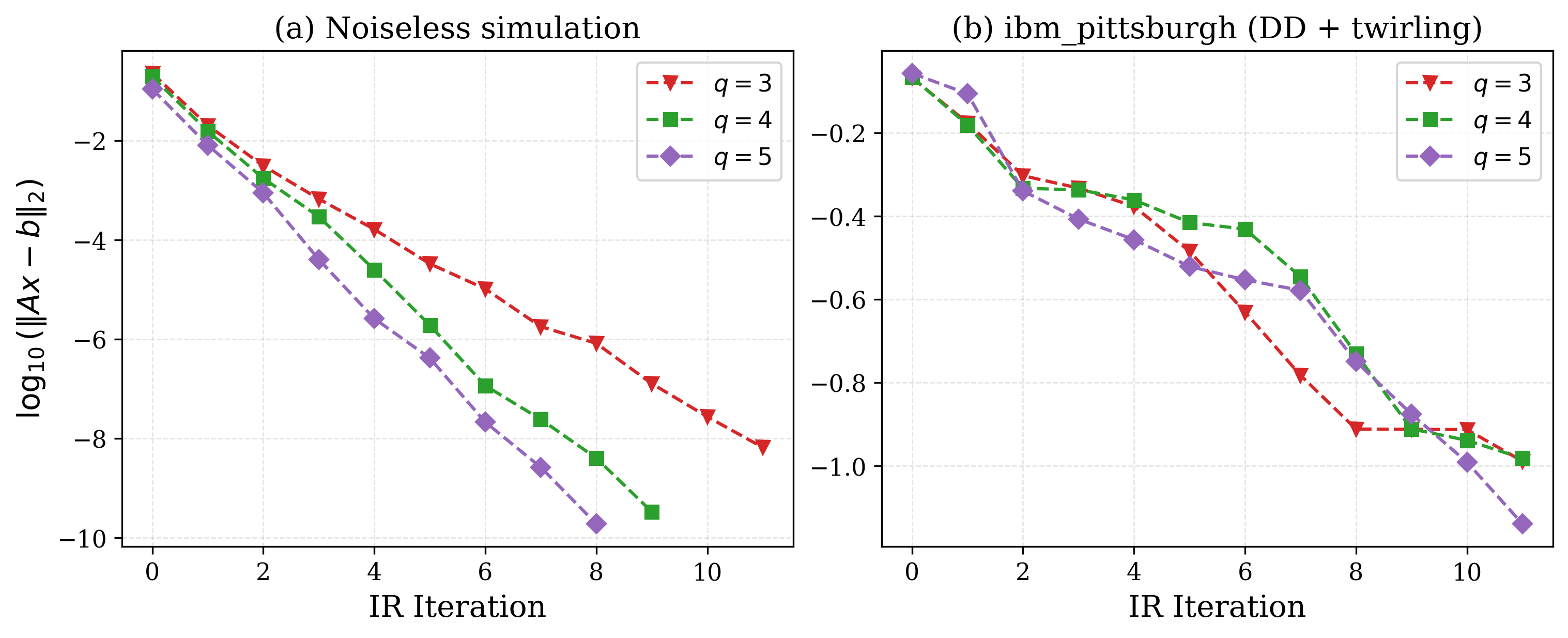}
    \caption{Residual norm convergence for an $8$-variable linear system with
    $q \in \{3,4,5\}$ QPE qubits over $12$ iterative refinement iterations.
    (a)~Noise-free simulation, in which each additional QPE qubit accelerates
    convergence. (b)~The same three configurations executed on the
    \texttt{IBM Pittsburgh} quantum computer.}
    \label{fig:hardware_convergence}
\end{figure}

Figure~\ref{fig:hardware_convergence} shows the residual norm
$\log_{10}\|Ax^{(k)} - b\|_2$ as a function of the IR iteration index $k$. In
noise-free simulation (panel~(a)) the convergence rate is ordered by the precision
of the quantum oracle as expected. Initializing the algorithm with a $10^{-9}$ residual norm convergence criteria and a $12$-iteration budget, the algorithm attains the desired residual norm in $9$ iterations for $q = 5$ and $10$ iterations for $q = 4$,
while $q = 3$ reaches only $6.6 \times 10^{-9}$ within the iteration budget.
Each additional QPE qubit decreases the eigenvalue quantization error and therefore
yields a more accurate search direction at every iteration.

On hardware (panel~(b)) all three configurations converge monotonically, reducing
the residual by roughly an order of magnitude from $8.6 \times 10^{-1}$ to
$1.03 \times 10^{-1}$ ($q = 3$), $1.04 \times 10^{-1}$ ($q = 4$), and
$7.3 \times 10^{-2}$ ($q = 5$).  However, the three curves are statistically
indistinguishable from one another. This is due to the device error rate: at a median two-qubit gate error of $1.5 \times 10^{-3}$, the transpiled circuits accumulate between two and four expected two-qubit errors per execution, so the fidelity of the returned residual at each IR iteration is bottlenecked by decoherence rather than by the resolution of the QPE register. Increasing $q$ improves the eigenvalue estimate but simultaneously deepens the circuit, and at current error rates the decoherence of the deeper circuits outweighs the improved eigenvalue resolution.

Taken together, the two panels validate the premise of our approach: repeated
application of a low-precision quantum oracle, combined with classical residual
updates, yields a progressively more accurate classical solution. The scheme
recovers the theoretically predicted dependence on oracle precision in the
noise-free setting, and remains convergent on hardware even when each individual
oracle call is corrupted by noise.

\subsection{Sensitivity analysis}

To directly quantify the resource savings afforded by IR, we compare the cumulative circuit depth required to solve a random $8$-variable linear system to a fixed residual threshold of $\|Ax - b\|_2 \leq 10^{-6}$ in noiseless simulation, with and without iterative refinement, as the condition number~$\kappa_A$ varies from~$10$ to~$50$.

Without iterative refinement, the cumulative circuit depth is simply the depth of a single high-precision HHL circuit. The precision of the algorithm is improved through increasing QPE register size (as the number of qubits in the QPE register corresponds to the number of digits of precision to which the eigenvalues of $A$ are estimated).  Note that while the precision of HHL improves exponentially with the number of QPE qubits $q$, this comes at the cost of exponentially increasing circuit depth, such that the HHL circuit depth scales linearly with the inverse precision.

With iterative refinement, the cumulative circuit depth corresponds to the depth of the (low-precision) HHL circuit used, multiplied by the number of iterative refinement iterations.

\begin{figure}[htbp]
    \centering
    \includegraphics[width=0.99\textwidth]{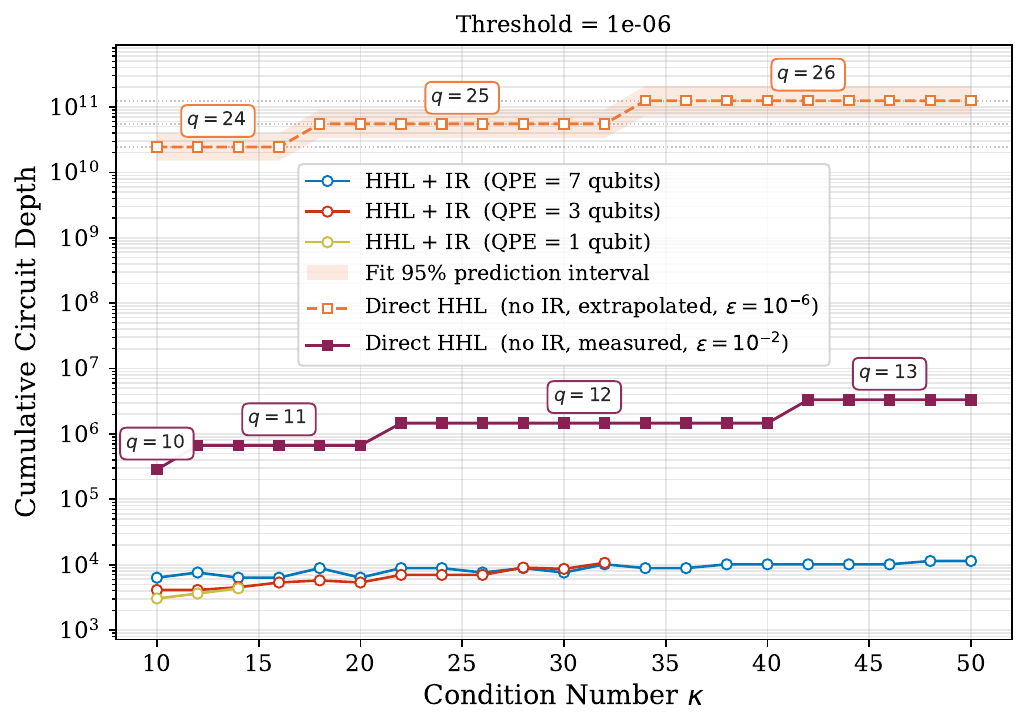}
    \caption{Cumulative circuit depth required to reach the residual threshold $\|Ax - b\|_2 \leq \varepsilon$, with and without iterative refinement.  Solid curves with filled markers are measured transpiled depths: HHL~+~IR at the target $\varepsilon = 10^{-6}$, and direct HHL at the relaxed threshold $\varepsilon = 10^{-2}$, for which the required QPE register sizes are directly compilable.  The dashed curve with hollow markers is the \emph{extrapolated} direct-HHL depth at $\varepsilon = 10^{-6}$, obtained from a fitted depth model; the shaded band is the fit's 95\% prediction interval.  Boxed labels indicate the required QPE register size~$q$ on each plateau.}
    \label{fig:sensitivity}
\end{figure}

For the IR approach, we fix a small number of QPE qubits ($q \in \{1,3,7\}$) and iterate Algorithm \ref{LS:alg:iterative refinement for LS} until the cumulative residual falls below the threshold.  The cumulative circuit depth is recorded directly from the transpiled circuits executed on the simulator at each iteration.

For the direct (no-IR) baseline, a single HHL call must resolve eigenvalues to precision $O(\varepsilon / \kappa_A)$, requiring $q \geq \lceil \log_2(\kappa_A / \varepsilon) \rceil$ QPE qubits.  We report two direct-HHL curves in Figure~\ref{fig:sensitivity}, carefully distinguishing measured from estimated quantities.  First, at a relaxed threshold of $\varepsilon = 10^{-2}$, the required register sizes ($q = 10$--$13$ over the tested range of~$\kappa_A$) remain directly compilable; this curve reports actual transpiled circuit depths with no modeling involved.  Second, at the target threshold of $\varepsilon = 10^{-6}$, the requirement grows to $q = 24$--$26$, requiring circuits far too deep to simulate or execute, and impractical even to transpile.  To estimate the corresponding depth, we transpile the full HHL circuit at every feasible QPE register size $q = 5, \ldots, 17$ and fit the model $\mathrm{depth}(q) \approx c_0 + c \cdot \gamma^{q}$ to the measured transpiled depths, obtaining a growth rate $\gamma \approx 2.26 \pm 0.03$ with $c \approx 79$. The constant $c_0$ has no physical interpretation: it merely allows the fit to account for contributions to the transpiled depth that grow more slowly than $\gamma^q$ (state preparation, readout, and the QFT stage of phase estimation), and its fitted value is strongly correlated with $c$ and $\gamma$. Its only role is to prevent these sub-exponential contributions from biasing the estimate of $\gamma$, which is the parameter that governs the extrapolation.  The fit is tight in log space ($R^2 = 0.999$, residual standard error $0.137$ in log-depth, i.e.\ a multiplicative depth uncertainty of roughly $\times/\div\, 1.15$ within the calibrated range); the shaded band in Figure~\ref{fig:sensitivity} shows the resulting 95\% prediction interval, which widens to roughly $\times/\div\, 1.7$ at the extrapolated register sizes.  The exponential scaling of transpiled depth with QPE register size is expected, as each additional QPE qubit roughly doubles the number of controlled rotations in the phase estimation circuit; the fitted rate slightly exceeds~$2$ because of the overhead from the eigenvalue inversion subroutine.

Figure~\ref{fig:sensitivity} displays the results.  The cumulative circuit depth for HHL~+~IR remains near $10^4$ across the full range of~$\kappa_A$ and is effectively independent of the condition number.  The comparison against direct HHL proceeds in two stages.  At the compilable threshold of $\varepsilon = 10^{-2}$, the measured direct depths range from $2.8 \times 10^5$ to $3.3 \times 10^6$---already a factor of $30$--$300$ deeper than IR, even though IR is held to the far stricter target of $10^{-6}$.  Tightening the direct baseline to the target $\varepsilon = 10^{-6}$ costs IR little more, while the extrapolated direct depth grows to $2.4 \times 10^{10}$--$1.3 \times 10^{11}$, a gap of six to seven orders of magnitude.  Moreover, both direct curves exhibit a staircase structure: each time $\kappa_A$ increases enough to require an additional QPE qubit, the circuit depth jumps by a factor of~$\gamma \approx 2.3$.  The IR curves display no such dependence, as the per-iteration circuit is fixed and the number of IR iterations grows only logarithmically in $1/\varepsilon$.

We also observe that the IR curves with $q = 1$ and $q = 3$ QPE qubits terminate at moderate values of~$\kappa_A$, as the per-iteration QLSA precision becomes insufficient to converge within the allotted iteration budget.  This reflects the practical requirement that the fixed precision~$\xi$ must satisfy $\xi < 1$ for geometric convergence (Theorem \ref{LS:theo; II-LS}); with too few QPE qubits, the effective per-iteration error improves too slowly to reach $\|Ax - b\|_2 \leq 10^{-6}$ in 30 iterations.  Nevertheless, even modest QPE precision ($q = 7$) suffices across the full range of tested condition numbers.

\subsection{Resource trade-offs}

A key advantage of the IR scheme is the ability to trade iteration count for circuit complexity.  Figure~\ref{fig:ir_tradeoffs} illustrates this trade-off for a $64$-variable linear system solved in simulation.  The left panel fixes the number of QPE qubits at $q = 6$ and varies the number of tomography shots per IR iteration; the right panel fixes the shot count at $10^6$ and varies the number of QPE qubits.  Color encodes the resulting residual $\log_{10}(\|Ax^{(k)} - b\|_2)$ at each parameter combination, and dashed contour lines show the cumulative quantum resources (shots and circuit depth, respectively) across all IR iterations up to that point.

\begin{figure}[htbp]
    \centering
    \begin{minipage}[b]{0.49\textwidth}
        \centering
        \includegraphics[width=\textwidth]{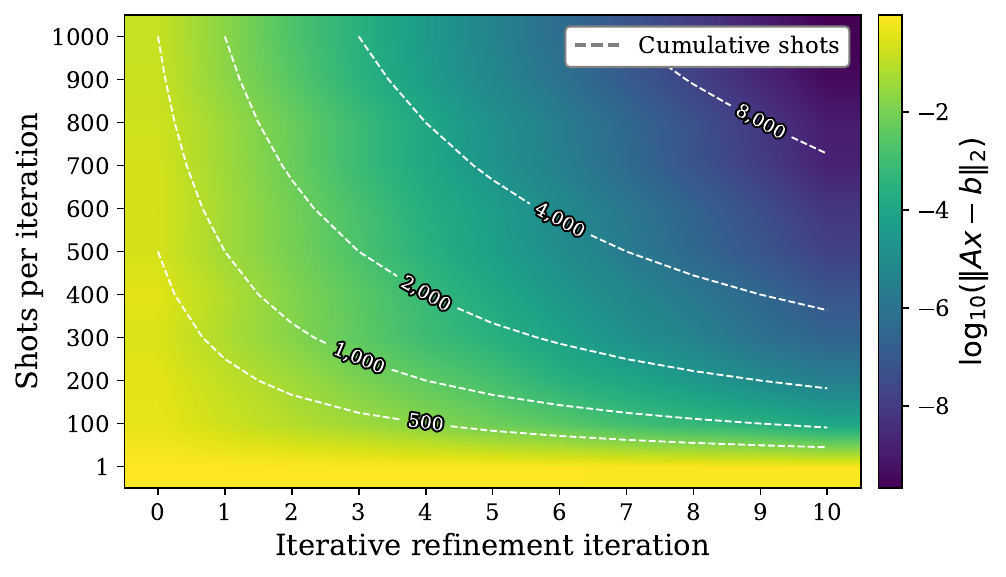}
        \label{fig:shots_vs_ir}
    \end{minipage}
    \begin{minipage}[b]{0.49\textwidth}
        \centering
        \includegraphics[width=\textwidth]{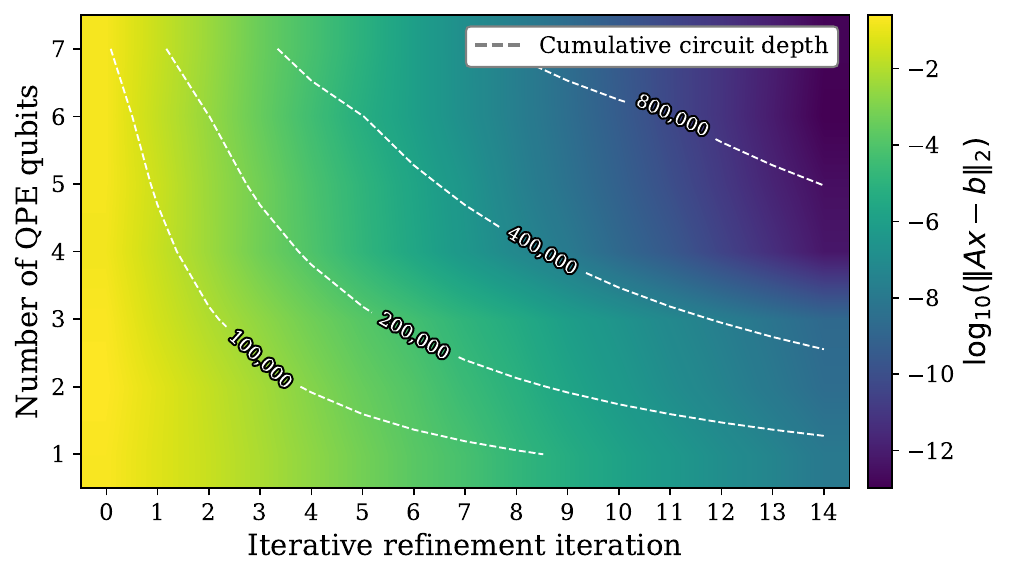}
        \label{fig:qpe_vs_ir}
    \end{minipage}
    \caption{Trade-offs between iterative refinement iterations and quantum resources for solving a 64-variable linear system to a desired residual using HHL. Color encodes the resulting residual at each parameter combination as $\log_{10}(\|Ax - b\|_2)$, and contour lines show cumulative quantum resources across all IR iterations.  In the left plot, 6 QPE qubits are used for the HHL circuit, and in the right plot, $10^6$ shots are used to readout x from $\ket{x}$ at each iteration of iterative refinement.}
    \label{fig:ir_tradeoffs}
\end{figure}
 
These heatmaps reveal the characteristic structure of the trade-off: along the horizontal axis, each additional IR iteration incrementally improves precision at constant per-iteration cost; along the vertical
axis, increasing the QPE register size (or shot budget) yields faster convergence per iteration but at the expense of deeper circuits (or more total measurements).  The contour lines make the iso-resource surfaces
explicit.  For instance, achieving a residual of $10^{-3}$ with 100 shots per iteration requires approximately 10 IR iterations, whereas 1000 shots per iteration achieves the same residual in only 2 iterations---but with roughly double the cumulative shots (2000 versus 1000). Operationally, given a fixed budget of quantum resources (shot count and/or circuit depth), one should aim to minimize the residual achieved along the corresponding contour line; in every case, this optimum is attained by using $k > 1$ iterations of iterative refinement rather than a single high-precision call.

The same principle applies to circuit depth, which is the dominant resource constraint on near-term hardware.  The right panel of Figure~\ref{fig:ir_tradeoffs} shows that one can reach arbitrarily high precision $\zeta$ by performing $O(\log(1/\zeta))$ iterations of a fixed, shallow circuit, rather than constructing a single deep circuit whose depth scales polynomially in~$1/\zeta$.  The per-iteration circuit depth grows approximately
exponentially in~$q$ (roughly a factor of 2 per additional QPE qubit), reflecting the cost of implementing controlled rotations of increasing precision in the QPE subroutine.  The IR scheme fixes $q$ at a modest
value and compensates with additional iterations, keeping the per-iteration depth constant and the total depth scaling linearly in the iteration count.

\subsection{Discussion}
 
Taken together, these experiments corroborate the theoretical predictions of Section~\ref{sec:IR}. The geometric convergence rate of Theorem~\ref{LS:theo; II-LS} is clearly visible in the simulator results for well-conditioned systems, where each iteration reduces the residual by a roughly constant factor on the logarithmic scale. The hardware experiment on \texttt{ibm\_pittsburgh} demonstrates that this convergence persists---albeit at a reduced rate---on noisy superconducting devices, lending practical credibility to the IR framework.
 
The sensitivity analysis highlights that the effective convergence rate depends on the interplay between $\kappa_A$ and the QPE precision $2^{-q}$ on HHL circuits: when $q$ is large enough to resolve the eigenvalue spectrum of $A$, convergence is rapid; when it is not, additional QPE qubits are needed. Crucially, even in regimes where a single QPE register cannot achieve the target precision outright, the IR scheme allows one to distribute the precision requirement across multiple iterations, each employing a manageable circuit depth.
 
From a resource perspective, the trade-off heatmaps confirm that the IR approach offers a practical path to high-precision solutions on near-term hardware: rather than requiring circuits whose depth grows polynomially in $1/\zeta$, one can execute $O(\log(1/\zeta))$ repetitions of a constant-depth circuit, paying only a logarithmic overhead in the total number of quantum operations. This makes the scheme particularly attractive for early fault-tolerant devices, where circuit depth is still a primary bottleneck.

\section{Conclusion}\label{s:con}
In this work, we present an iterative refinement scheme that uses low-precision quantum oracles to accurately solve classical linear systems. The key advantage of the proposed approach is that all quantum subroutines operate at fixed precision, reducing the dependence on the final target precision from polynomial to polylogarithmic when QLSAs are combined with quantum state tomography.

We analyze the framework under QRAM, sparse-access, and LCU input models, and support the theoretical results with numerical experiments on both simulators and quantum hardware. The experiments demonstrate geometric residual reduction and show that repeated low-precision quantum solves can be more resource-efficient than a single high-precision solve.

The main remaining classical cost is the matrix-vector multiplication required to compute the residual at each refinement step. Developing efficient quantum approaches for residual computation is a natural direction for future research.


\section*{Acknowledgment}
An earlier technical report \cite{mohammadisiaroudi2023exponentially}, related to this work, was posted in 2023.
We thank Ramin Fakhimi for his contributions to an earlier implementation of the QLSA and IR framework.
This work is supported by the Defense Advanced Research Projects Agency as part of the project W911NF2010022: {\em The Quantum Computing Revolution and Optimization: Challenges and Opportunities}. G.~Nannicini is supported by ONR award \# N000142312585. A.~Harkness is supported by PNNL’s Quantum Algorithms and Architecture for Domain Science (QuAADS) Laboratory Directed Research and Development (LDRD) Initiative. This material is based upon work supported by the U.S. Department of Energy, Office of Science, National Quantum Information Science Research Centers, Quantum Science Center (QSC). The Pacific Northwest National Laboratory is operated by Battelle for the U.S. Department of Energy under Contract DE-AC05-76RL01830. This research used resources of the Oak Ridge Leadership Computing Facility (OLCF), which is a DOE Office of Science User Facility supported under Contract DE-AC05-00OR22725. This research used resources of the National Energy Research Scientific Computing Center (NERSC), a U.S. Department of Energy Office of Science User Facility located at Lawrence Berkeley National Laboratory, operated under Contract No. DE-AC02-05CH11231.

{\hyphenpenalty=100000
\bibliographystyle{dependencies/siamplain}
\bibliography{dependencies/references}}
\end{document}